\declaretheorem{theorem}
\declaretheorem[sibling=theorem]{lemma}
\declaretheorem[sibling=theorem]{proposition}
\declaretheorem[sibling=theorem]{corollary}
\newcommand{\floor}[1]{\lfloor #1 \rfloor}
\newcommand{\bigfloor}[1]{\bigg\lfloor #1 \bigg\rfloor}
\title{Enumeration of rooted binary perfect phylogenies} 
\author{Chloe E.~Shiff\thanks{Institute for Computational and Mathematical Engineering, Stanford University, Stanford, CA 94305, USA. cshiff@stanford.edu.}, Noah A.~Rosenberg\thanks{Department of Biology, Stanford University, Stanford, CA 94305, USA. noahr@stanford.edu.}}
\begin{document}
\maketitle

\begin{abstract}
Rooted binary perfect phylogenies provide a generalization of rooted binary unlabeled trees in which each leaf is assigned a positive integer value that corresponds in a biological setting to the count of the number of indistinguishable lineages associated with the leaf. For the rooted binary unlabeled trees, these integers equal 1. We enumerate rooted binary perfect phylogenies with $n \geq 1$ leaves and sample size $s$, $s \geq n$: the rooted binary unlabeled trees with $n$ leaves in which a sample of size $s \geq n$ lineages is distributed across the $n$ leaves. 
(1) First, we recursively enumerate rooted binary perfect phylogenies with sample size $s$, summing over all possible $n$, $1 \leq n \leq s$. We obtain an equation for the generating function, showing that asymptotically, the number of rooted binary perfect phylogenies with sample size $s$ grows with $\approx 0.3519(3.2599)^s s^{-3/2}$, faster than the rooted binary unlabeled trees, which grow with $\approx 0.3188(2.4833)^s s^{-3/2}$.
(2) Next, we recursively enumerate rooted binary perfect phylogenies with a specific number of leaves $n$ and sample size $s \geq n$. We report closed-form counts of the rooted binary perfect phylogenies with sample size $s \geq n$ and $n=2$, 3, and 4 leaves. We provide a recurrence for the generating function describing, for each number of leaves $n$, the number of rooted binary perfect phylogenies with $n$ leaves as the sample size $s$ increases. 
(3) Finally, we recursively enumerate rooted binary perfect phylogenies with a specific unlabeled tree shape and sample size $s$. In a special case, we find a generating function for the number of rooted binary perfect phylogenies with the $n$-leaf caterpillar shape, growing with $s$. We also find a generating function for the number of rooted binary perfect phylogenies with sample size $s$ and any caterpillar tree shape.
The enumerations further characterize the rooted binary perfect phylogenies, which include the rooted binary unlabeled trees, and which can provide a set of structures useful for various biological contexts. 
\end{abstract}

{\small Key words: generating functions, perfect phylogenies, unlabeled trees.} \\
\indent {\small MSC2020 classification: 05A15, 05A16, 05C05, 92D15}

\section{Introduction}\label{sec:intro}

Rooted binary unlabeled tree structures are classic objects of combinatorics and graph theory~\cite{drmota, flajolet_sedgewick_2009}. In evolutionary biology, rooted binary unlabeled trees are used to describe the possible relationships that a set of unlabeled organisms can possess, so that they arise in inferences about features of speciation histories~\cite{Felsenstein05, Steel16}.

The rooted binary unlabeled trees can be enumerated recursively. Denoting by $u_n$ the number of rooted binary unlabeled trees with $n$ leaves, for $n \geq 2$, the recursion is
\begin{equation}
\label{eq:wedderburn_rec}
u_n=
\begin{cases}
\sum\limits_{i=1}^{n-1} \frac{1}{2} u_{n-i}u_i, & \text{odd } n \geq 3, \\
\bigg(\sum\limits_{i=1}^{n-1} \frac{1}{2} u_{n-i}u_i \bigg) +\frac{1}{2}u_{n/2}, & \text{even } n \geq 2,
\end{cases} 
\end{equation}
with $u_0=0$ and $u_1=1$. The recursion is obtained by summing over possible numbers of leaves $i$ for the right-hand subtree descended from the root. The factor of $\frac{1}{2}$ arises from the fact that each tree is obtained twice---once with its left and right subtrees transposed. If $n$ is even, for $i=\frac{n}{2}$, the recurrence counts the $\binom{u_{n/2}}{2}$ trees with distinct subtrees and the $u_{n/2}$ trees with identical subtrees. The $u_n$, $n \geq 1$, follow the Wedderburn-Etherington sequence (OEIS A001190), with initial terms $1, 1, 1, 2, 3, 6, 11, 23, 46, 98$; they have generating function
\begin{equation}
\label{eq:U}
U(z) = \frac{1}{2} U(z)^2 + \frac{1}{2} U(z^2) + z.
\end{equation}
The convergence radius for $U(z)$ is approximately 0.4027. The asymptotic growth of $u_n$ approximately follows $0.3188(2.4833)^{n}n^{-3/2}$~\cite{Comtet74, Harding71, Landau77}.

Rooted binary \emph{perfect phylogenies} can be viewed as generalizing the rooted binary unlabeled trees. A rooted binary perfect phylogeny is a rooted binary tree in which each leaf is associated with a positive integer~\cite{palacios2022}. Each integer can be regarded as a multiplicity for the biological entity associated with a leaf---for example, the number of times that a specific DNA sequence is seen in a sample of sequences that are not necessarily distinct. A rooted binary perfect phylogeny has a number of leaves $n$ and a sample size $s$ that represents the sum of the multiplicities at the leaves. Rooted binary perfect phylogenies are a special case of rooted \emph{multifurcating} perfect phylogenies---perfect phylogenies in which internal nodes possess two or more immediate descendants~\cite{palacios2022}. The rooted binary unlabeled trees correspond to rooted binary perfect phylogenies in the case that $s=n$ (and all leaf multiplicities equal 1).

In evolutionary biology, perfect phylogenies can sometimes be used as representations of the relationships of genetic sequences~\cite{Gusfield91, palacios2022}. The topology of a perfect phylogeny encodes ancestral relationships in a set of sequences that have not experienced recombination, and in which each mutation has occurred only once. \emph{Rooted} perfect phylogenies have one vertex designated as the root, representing a sequence from which all other sequences in the perfect phylogeny descend.

Palacios et al.~\cite{palacios2022} have recently developed the enumerative combinatorics of rooted perfect phylogenies, focusing on enumerations of various classes of binary trees that are \emph{compatible} with a given rooted perfect phylogeny. In the work of Palacios et al.~\cite{palacios2022}, all possible rooted binary trees (with leaf multiplicities equal to 1) that can be ``collapsed'' into a specific perfect phylogeny (with leaf multiplicities possibly greater than 1) are enumerated. To facilitate the enumerations, Palacios et al.~\cite{palacios2022} defined a partial order on rooted binary perfect phylogenies, inducing a lattice structure for the rooted binary perfect phylogenies with fixed sample size $s$. The lattice structure defines the sense in which trees can be ``collapsed.''

The lattice of Palacios et al.~\cite{palacios2022} describes a space of possible rooted binary perfect phylogenies; although Palacios et al.~\cite{palacios2022} focused on using the lattice to enumerate rooted binary trees associated with a rooted binary perfect phylogeny, the lattice formulation provides a useful structure for understanding the perfect phylogenies themselves. In Section~\ref{sec:s_only}, we recursively enumerate the rooted binary perfect phylogenies with sample size $s$, considering all possible values of the number of leaves $n$; we provide the asymptotic approximation of this quantity as $s$ increases. In Section~\ref{sec:s,n}, we recursively enumerate rooted binary perfect phylogenies with a specific number of leaves $n$ and sample size $s$. We provide a recursive equation to compute, for each $n$, the generating function for the sequence of the number of rooted binary perfect phylogenies with the number of leaves $n$ fixed and the sample size $s$ growing.  In Section~\ref{sec:treeshape}, we recursively enumerate the rooted binary perfect phylogenies with sample size $s$ for a fixed unlabeled tree shape. When this tree shape is a caterpillar, we obtain, for each small $n$, a closed-form expression for the number of perfect phylogenies for any $s$. We also provide generating functions for the number of perfect phylogenies with an $n$-leaf caterpillar shape and with any caterpillar shape, growing with $s$.

\section{Preliminaries}\label{sec:prelims}

\subsection{Definitions}

We restrict attention to rooted perfect phylogenies that are \emph{binary}, with each internal node possessing exactly two child nodes. Henceforth, the perfect phylogenies that we consider are understood to be rooted and binary, and we sometimes omit these descriptors. Like the rooted binary unlabeled trees, we consider perfect phylogenies to be \emph{non-plane} trees, so that the left--right order in which child nodes are depicted is ignored.

We denote the \emph{number of leaves} in a perfect phylogeny by $n$. Each leaf is associated with a positive integer, its \emph{multiplicity}---representing in biological applications of perfect phylogenies the number of copies of a biological sequence seen in a sample of sequences. We refer to the sum of the multiplicities in a perfect phylogeny as its \emph{sample size}.

For fixed sample size $s$, it is convenient for the lattice formulation to allow an \emph{empty} perfect phylogeny, though we exclude this empty perfect phylogeny from our enumerations. We do include the \emph{trivial} perfect phylogeny with sample size $s$, namely a perfect phylogeny that consists only of a single leaf of multiplicity $s$.

Figure~\ref{fig:1} displays an example perfect phylogeny with $n=8$ leaves. The leaf multiplicities are 2, 1, 3, 3, 1, 4, 1, and 2, for a total sample size $s=17$.
\begin{figure}
\centering
\includegraphics[width=4cm]{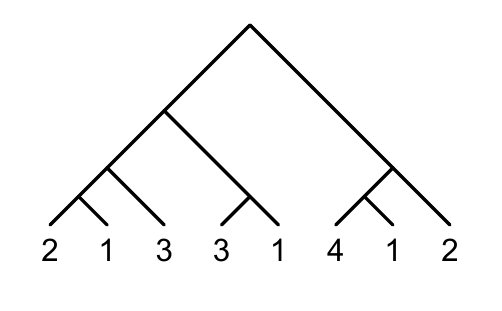}
\vspace{-.5cm}
\caption{A perfect phylogeny with sample size $s=17$ and $n=8$ leaves. The numbers at the leaves represent leaf multiplicities.}
\label{fig:1}
\end{figure}

\subsection{Lattices of perfect phylogenies}

Palacios et al.~\cite{palacios2022} defined a partial order on the rooted binary perfect phylogenies with sample size $s$. Recall that a \emph{cherry node} in a rooted tree is an internal node with precisely two descendant leaves. Consider binary perfect phylogenies $A$ and $B$. In the partial order, $A$ \emph{refines} $B$ if by collapsing cherries of $A$, $B$ can be produced. Trivially, a perfect phylogeny refines itself. $A$ and $B$ are \emph{comparable} if $A$ refines $B$ or $B$ refines $A$.

Considering all binary perfect phylogenies with sample size $s$, the partial order of Palacios et al.~\cite{palacios2022} produces a lattice. Figure~\ref{fig:2lattice5} depicts the lattice for the case of $s=5$. Moving left to right, a path is drawn between pairs $(A,B)$, with $A$ to the left of $B$, if and only if $A$ refines $B$. The trivial perfect phylogeny of sample size $s$ is refined by all perfect phylogenies of sample size $s$ and is the maximal element of the lattice. The empty perfect phylogeny refines all perfect phylogenies of sample size $s$ and is the minimal element.

\subsection{Description of the enumeration problems}
\label{sec:classes}

We enumerate several sets of objects. First, we consider the set of (non-empty) rooted binary perfect phylogenies with fixed sample size $s \geq 1$. Denote the size of this set by $b_s$. Next, we enumerate the rooted binary perfect phylogenies with fixed sample size $s$ and fixed number of leaves $n$, $1 \leq n \leq s$. Denote the size of this set by $b_{s,n}$; we have $b_s = \sum_{n=1}^s b_{s,n}$. We enumerate the rooted binary perfect phylogenies with sample size $s$ and a caterpillar topology with $n$ leaves, where $s \geq n \geq 2$, denoting this quantity $g_{s,n}$; we also enumerate the rooted binary phylogenies with sample size $s$ and \emph{any} caterpillar topology, denoting this quantity $g_s = \sum_{n=2}^s g_{s,n}$. Finally, we enumerate the rooted binary perfect phylogenies with sample size $s$ and a specified unlabeled topology $T$, denoting this quantity $N_{s,T}$. 

Note that we have already described $b_{n,n}$ (alternatively, $b_{s,s}$), the number of rooted binary perfect phylogenies with sample size equal to the number of leaves in eq.~\eqref{eq:wedderburn_rec}. A rooted binary perfect phylogeny with $s=n$ is simply a rooted binary unlabeled tree; each leaf multiplicity in the perfect phylogeny is 1, so that the rooted binary unlabeled trees correspond to the rooted binary perfect phylogenies in which all leaf multiplicities equal 1. Hence, $b_{n,n}=u_n$.

\section{Rooted binary perfect phylogenies with sample size \texorpdfstring{$s$}{s}}\label{sec:s_only}

\subsection{Enumeration}

To enumerate all rooted binary perfect phylogenies with a fixed sample size $s \geq 1$, we first note that for each $s$, the trivial perfect phylogeny is permissible. If a perfect phylogeny is not trivial, then each of the two child nodes of the root is itself the root of a perfect phylogeny. In other words, the perfect phylogeny can be decomposed into two perfect phylogenies, one with sample size $i$, $1 < i < s$, and the other with sample size $s-i$.

\begin{figure}[tb]
\centering
\includegraphics[width=0.94\textwidth]{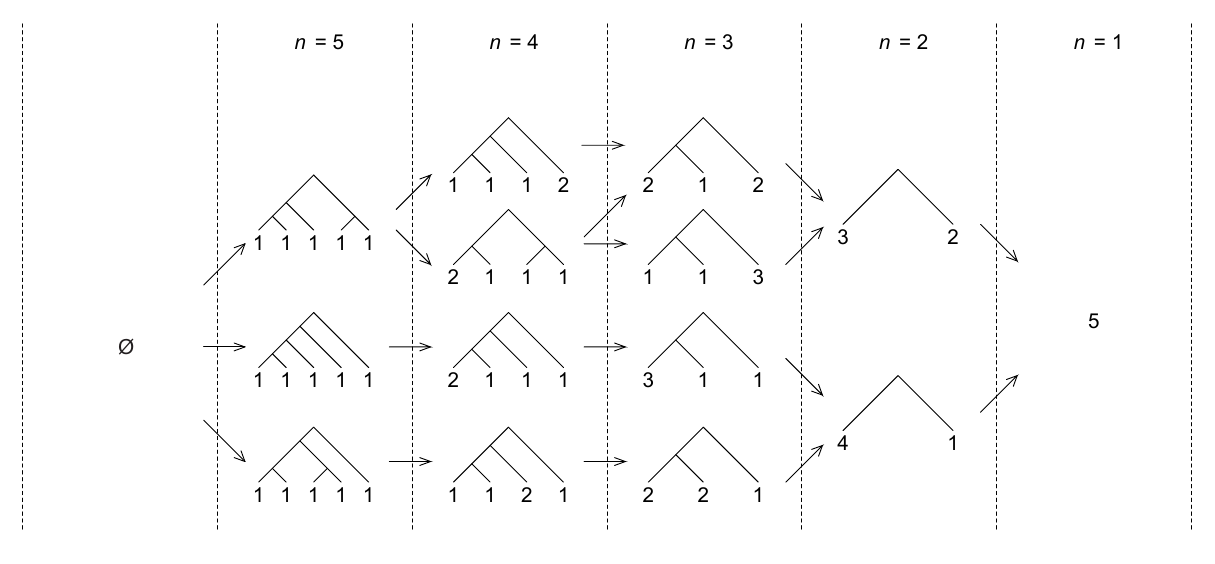}
\vspace{-.5cm}
\caption{The lattice of rooted binary perfect phylogenies for sample size $s=5$. Each column is labeled by its associated number of leaves $n$.} 
\label{fig:2lattice5}
\end{figure}

Noting that for $i=s-i$, we count the $\binom{b_{s/2}}{2}=b_{s/2}(b_{s/2}-1)/2$ perfect phylogenies with distinct perfect phylogenies in the two children of the root and the $b_{s/2}$ perfect phylogenies with identical perfect phylogenies in the subtrees, and noting that $\binom{b_{s/2}}{2}+b_{s/2}=\frac{1}{2}(b_{s/2}^2+b_{s/2})$. We obtain the following result.
\begin{proposition}
The number $b_s$ of rooted binary perfect phylogenies with sample size $s\geq 2$ satisfies
\begin{equation}
b_s=\begin{cases}
1+ \sum\limits_{i=1}^{s-1} \frac{1}{2}b_{s-i}b_i, & \text{odd } s \geq 3, \\
1+\bigg( \sum\limits_{i=1}^{s-1} \frac{1}{2} b_{s-i}b_i \bigg) + \frac{1}{2} b_{s/2}, & \text{even } s \geq 2,\\
\end{cases}
\label{eq:enumeration_alln}
\end{equation}
with $b_0=0$ and $b_1=1$.
\label{prop:bs recursion}
\end{proposition}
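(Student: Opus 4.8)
The plan is to argue by a direct structural decomposition of the set of rooted binary perfect phylogenies with sample size $s$, mirroring the classical derivation of the Wedderburn-Etherington recursion in eq.~\eqref{eq:wedderburn_rec}. First I would record the base cases: every leaf has multiplicity at least $1$, so a non-empty perfect phylogeny has sample size at least $1$, giving $b_0 = 0$; and the only perfect phylogeny with sample size $1$ is the single leaf of multiplicity $1$, since any internal node forces sample size at least $2$, giving $b_1 = 1$.

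For $s \geq 2$, I would partition the perfect phylogenies with sample size $s$ into (i) the trivial one, a single leaf of multiplicity $s$, contributing the additive constant $1$; and (ii) the non-trivial ones, in which the root is an internal node. In case (ii), since the phylogeny is binary and non-plane, deleting the root and its two incident edges yields an \emph{unordered} pair of non-empty perfect phylogenies whose sample sizes sum to $s$, and conversely joining any such unordered pair under a new root recovers a non-trivial perfect phylogeny of sample size $s$. I would point out explicitly that this is a genuine bijection: a perfect phylogeny here carries no data beyond its unlabeled topology and its leaf multiplicities, so there is no compatibility constraint linking the two root subtrees, and the whole structure is recovered from the pair. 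Hence the count in case (ii) equals the number of size-$2$ multisets $\{P,Q\}$ of perfect phylogenies with (sample size of $P$) $+$ (sample size of $Q$) $= s$.

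It then remains to count those multisets. Writing $|P|$ for the sample size, I would split according to whether the two sizes are equal. For $i \neq s-i$, the multisets with $\{|P|,|Q|\} = \{i, s-i\}$ number $b_i b_{s-i}$, and summing over $1 \leq i < s/2$ gives $\sum_{1 \leq i < s/2} b_i b_{s-i}$; if $s$ is even there are additionally the multisets with $|P| = |Q| = s/2$, of which there are $\binom{b_{s/2}}{2} + b_{s/2} = \tfrac12(b_{s/2}^2 + b_{s/2})$, counting the pairs of distinct phylogenies and the pairs of identical ones. On the other hand the symmetric sum satisfies $\sum_{i=1}^{s-1} \tfrac12 b_i b_{s-i} = \sum_{1 \leq i < s/2} b_i b_{s-i} + [\,s \text{ even}\,]\,\tfrac12 b_{s/2}^2$, because each off-diagonal term is counted twice under $i \leftrightarrow s-i$ and the diagonal term, when present, once. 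Subtracting, the case-(ii) count equals $\sum_{i=1}^{s-1} \tfrac12 b_i b_{s-i}$ for odd $s$, and equals $\sum_{i=1}^{s-1} \tfrac12 b_i b_{s-i} + \tfrac12 b_{s/2}$ for even $s$, using $\tfrac12(b_{s/2}^2 + b_{s/2}) - \tfrac12 b_{s/2}^2 = \tfrac12 b_{s/2}$. Adding the $1$ from case (i) yields eq.~\eqref{eq:enumeration_alln}.

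I do not expect a serious obstacle; the argument is routine. The two points requiring care are (a) confirming that the root decomposition really is a bijection with no hidden constraint coupling the two subtrees, which is immediate from the definition of a non-plane perfect phylogeny with unlabeled leaves but deserves an explicit sentence, and (b) the bookkeeping at $i = s/2$, where one must correctly distinguish ordered pairs, unordered pairs, and multisets with repetition; this is exactly where the extra $\tfrac12 b_{s/2}$ term originates and where an off-by-one error is easiest to commit. I would also sanity-check the recursion on the smallest cases, $b_2 = 2$ and $b_3 = 3$, to confirm that the additive $1$ and the range of the summation are placed correctly.
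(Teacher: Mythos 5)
Your proposal is correct and matches the paper's own argument: the paper likewise splits off the trivial single-leaf phylogeny (the $+1$ term) and decomposes a non-trivial phylogeny at the root into an unordered pair of sub-phylogenies, with the $\binom{b_{s/2}}{2}+b_{s/2}=\tfrac12(b_{s/2}^2+b_{s/2})$ bookkeeping at $i=s/2$ producing the extra $\tfrac12 b_{s/2}$ for even $s$. Your explicit verification that the root decomposition is a bijection and the check on $b_2$, $b_3$ are just slightly more detailed versions of the same reasoning.
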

The recursion has the same form as eq.~\eqref{eq:wedderburn_rec}, adding a $+1$ term for the trivial perfect phylogeny. The first terms of the sequence appear in Table~\ref{table:perfect}, along with the Wedderburn-Etherington numbers of rooted binary unlabeled trees. The number of rooted binary perfect phylogenies $b_s$ with sample size $s$ appears to grow substantially faster than $b_{s,s}$, the number of rooted binary unlabeled trees with sample size $s$ and multiplicity 1 assigned to each leaf.

\subsection{Generating function}

To analyze the asymptotic growth of the rooted binary perfect phylogenies with sample size $s$ as $s \rightarrow \infty$, we rewrite eq.~\eqref{eq:enumeration_alln} in the form
\begin{equation}
b_s=\frac{1}{2} \bigg( \sum_{i=1}^{s-1} b_{s-i}b_i \bigg) +\frac{1}{2} b_{s/2}+1, \quad s\geq 1,
\label{eq:recursion}
\end{equation}
with base case $b_0=0$ and $b_s=0$ if $s$ is not a positive integer.

Denote by $B(z)$ the generating function for the rooted binary perfect phylogenies with sample size $s$, $B(z)=\sum_{s=0}^\infty b_sz^s$. To obtain the generating function for the $b_s$, we multiply eq.~\eqref{eq:recursion} by $z^s$ and sum from $s=0$ to $\infty$, obtaining
\begin{eqnarray}
B(z) &=& \sum_{s=1}^\infty \bigg( \frac{1}{2}\sum_{i=1}^{s-1} b_{s-i}b_i z^s \bigg) +\sum_{s=1}^\infty \frac{1}{2} b_{s/2} z^s+\sum_{s=1}^\infty z^s.
\nonumber
\end{eqnarray}
We simplify by noting $\sum_{s=1}^\infty ( \frac{1}{2}\sum_{i=1}^{s-1} b_{s-i}b_i z^s ) = \frac{1}{2}B(z)^2$, $\sum_{s=1}^\infty \frac{1}{2} b_{s/2} z^s = \frac{1}{2} \sum_{s=1}^\infty b_s z^{2s} = \frac{1}{2} B(z^2)$, and $\sum_{s=1}^{\infty} z^s = \frac{z}{1-z}$. We have therefore demonstrated the following proposition.
\begin{proposition}
\label{prop:Bs gen func}
The generating function $B(z)$ for the number $b_s$ of rooted binary perfect phylogenies with sample size $s$ satisfies
\begin{equation}
B(z)=\frac{1}{2}B^2(z)+\frac{1}{2} B(z^2)+\frac{z}{1-z}.
\label{eq:final_generating}
\end{equation}
\end{proposition}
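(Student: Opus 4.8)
The plan is to derive eq.~\eqref{eq:final_generating} directly from the recursion~\eqref{eq:recursion}, exactly as the paragraph preceding the proposition sketches, by multiplying through by $z^s$ and summing. First I would restate the recursion in the clean form~\eqref{eq:recursion}, $b_s = \tfrac12 \sum_{i=1}^{s-1} b_{s-i}b_i + \tfrac12 b_{s/2} + 1$ valid for all $s \geq 1$, with the conventions $b_0 = 0$ and $b_s = 0$ whenever $s$ is not a positive integer. The point of these conventions is that they make the formula uniform: the $+1$ absorbs the trivial phylogeny for every $s \geq 1$, and the $b_{s/2}$ term silently vanishes for odd $s$, so no case split is needed.

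Next I would multiply by $z^s$ and sum over $s \geq 1$, giving $B(z) = \sum_{s\ge1} b_s z^s$ on the left (the $s=0$ term contributes nothing since $b_0 = 0$) and three sums on the right. I would handle each of the three sums in turn. For the convolution sum, I would note $\sum_{s\ge1}\big(\sum_{i=1}^{s-1} b_{s-i}b_i\big)z^s = \sum_{s\ge0}\big(\sum_{i=0}^{s} b_{s-i}b_i\big)z^s = B(z)^2$, where the reindexing is justified because $b_0 = 0$ kills the $i=0$ and $i=s$ boundary terms; hence this contributes $\tfrac12 B(z)^2$. For the middle sum, $\sum_{s\ge1} b_{s/2} z^s$ picks up a nonzero term only when $s$ is even, $s = 2t$, so it equals $\sum_{t\ge1} b_t z^{2t} = B(z^2)$, contributing $\tfrac12 B(z^2)$. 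For the last sum, $\sum_{s\ge1} z^s = z/(1-z)$, a geometric series convergent for $|z| < 1$. Assembling the three pieces yields eq.~\eqref{eq:final_generating}.

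The only genuine subtlety — and the step I would be most careful about — is the manipulation of formal power series, specifically justifying the interchange of summation order in the convolution term and the legitimacy of treating these as identities of analytic functions rather than merely formal series. I would address this by observing that the $b_s$ are bounded by the Wedderburn–Etherington-type growth plus the trivial-phylogeny contribution, so $B(z)$ has a positive radius of convergence (indeed the discussion after the proposition identifies a radius near $1/3.2599$), and within that disk all rearrangements of these absolutely convergent double series are valid; alternatively, one may read~\eqref{eq:final_generating} purely as an identity in $\mathbb{R}[[z]]$, where the Cauchy-product identity $\big(\sum b_s z^s\big)^2 = \sum_s \big(\sum_{i=0}^s b_{s-i}b_i\big)z^s$ holds by definition. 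Either framing makes the derivation rigorous; I would state which one I am adopting and then the computation above closes the proof.
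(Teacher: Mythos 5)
Your proposal is correct and follows essentially the same route as the paper: multiply the uniform recursion~\eqref{eq:recursion} by $z^s$, sum over $s$, and identify the convolution, the $b_{s/2}$ sum, and the geometric series as $\tfrac{1}{2}B(z)^2$, $\tfrac{1}{2}B(z^2)$, and $z/(1-z)$ respectively. The extra remarks on formal-series versus analytic justification go slightly beyond what the paper records but do not change the argument.
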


\subsection{Asymptotics}

Now that we have obtained an equation satisfied by the generating function for the coefficients $b_s$, we find an asymptotic approximation for the growth of the $b_s$.

\begin{table}[tb]
\centering
\begin{tabular}{ccccccccccccccc}
\toprule
 $s$       & 1 & 2 & 3 & 4 &  5 &  6 &  7 &   8 &   9 & 10 & 11 & 12 & 13 & 14 \\
\midrule
 $b_s$     & 1 & 2 & 3 & 7 & 14 & 35 & 85 & 226 & 600 & 1658 & 4622 & 13141 & 37699 & 109419 \\
 $b_{s,s}$ & 1 & 1 & 1 & 2 &  3 &  6 & 11 &  23 &  46 &  98  &  207 &   451 &   983 &   2179 \\
 \bottomrule
\end{tabular}
\caption{The number of rooted binary perfect phylogenies $b_s$ with sample size $s$ (eq.~\eqref{eq:enumeration_alln}, OEIS A113822), and the number of rooted binary unlabeled trees $b_{s,s}=u_s$ (eq.~\eqref{eq:wedderburn_rec}, OEIS A001190).}
\label{table:perfect}
\end{table}

Recall that the generating function $U(z)$ for the number of rooted binary unlabeled trees in eq.~\eqref{eq:U} has a radius of convergence $\rho \approx 0.4027$. The rooted binary perfect phylogenies with sample size $s$ include the rooted binary unlabeled trees with $s$ leaves, so that $b_s \geq u_s$, and indeed $b_s > u_s$ for $s \geq 2$. Hence, we have $B(z) > U(z)$ for all $z$ with $0 < z < \rho$. Labeling the radius of convergence of $B(z)$ by $\beta$, it follows that $0 \leq \beta \leq \rho < 1$. In addition, because $z^2 < z$ for $0 < z < 1$ and $B(z)$ is monotonically increasing with $z$ for $z>0$, $B(z^2) < B(z)$ for $0 < z < 1$, so that if $B(z)$ converges at $z$, $0 < z < 1$, then it also converges at $z^2$.

To obtain the asymptotic approximation, we first prove a lemma about the relationship of the $b_s$ to the Catalan numbers.
\begin{lemma}
Each $b_s$ for $s \geq 1$ is bounded above by the Catalan number $C_s = \frac{1}{s+1} \binom{2s}{s}$.
\label{lem:bncn}
\end{lemma}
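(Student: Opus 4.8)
The plan is to prove the bound $b_s \le C_s$ by strong induction on $s$, exploiting the fact that the Catalan numbers satisfy the convolution recurrence $C_s = \sum_{i=0}^{s-1} C_i C_{s-1-i}$ with $C_0 = 1$, while the $b_s$ satisfy eq.~\eqref{eq:recursion}. First I would check the base cases directly: $b_1 = 1 = C_1$, $b_2 = 2 = C_2$, and $b_3 = 3 \le 5 = C_3$ (having a couple of explicit small cases in hand is convenient because the recurrence for $b_s$ mixes in the $b_{s/2}$ term and the constant $+1$). For the inductive step, assume $b_k \le C_k$ for all $1 \le k < s$. From eq.~\eqref{eq:recursion},
\[
b_s = \frac{1}{2}\sum_{i=1}^{s-1} b_{s-i} b_i + \frac{1}{2} b_{s/2} + 1 \le \frac{1}{2}\sum_{i=1}^{s-1} C_{s-i} C_i + \frac{1}{2} C_{s/2} + 1,
\]
where the $b_{s/2}$ term is understood to be $0$ when $s$ is odd, and I use the induction hypothesis termwise (all quantities are nonnegative). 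So it suffices to show the right-hand side is at most $C_s$.

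The key identity to invoke is the Catalan convolution. Writing $\sum_{i=1}^{s-1} C_i C_{s-i} = \sum_{i=0}^{s} C_i C_{s-i} - 2 C_0 C_s = C_{s+1} - 2 C_s$ (using $\sum_{i=0}^{s} C_i C_{s-i} = C_{s+1}$ and $C_0 = 1$), the bound to establish becomes
\[
\frac{1}{2}\left(C_{s+1} - 2 C_s\right) + \frac{1}{2} C_{s/2} + 1 \le C_s, \qquad \text{i.e.} \qquad C_{s+1} + C_{s/2} + 2 \le 4 C_s.
\]
Since $C_{s+1} = \frac{2(2s+1)}{s+2} C_s$, the inequality $C_{s+1} \le 4 C_s - C_{s/2} - 2$ reduces, after clearing the denominator $s+2 > 0$, to $2(2s+1) C_s \le (s+2)(4 C_s - C_{s/2} - 2)$, i.e. $0 \le 2(s+1) C_s - (s+2)(C_{s/2} + 2)$. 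Because $C_{s/2} \le C_s$ trivially (Catalan numbers are increasing) and $2(s+1) - (s+2) = s > 0$ while the remaining $-2(s+2)$ is negligible against $s\, C_s$, this holds comfortably for all $s \ge 3$; I would verify the small cases $s = 2, 3$ by hand and let the crude estimate $C_{s/2} + 2 \le 2 C_s$ (valid once $C_s \ge 2$, i.e. $s \ge 3$) finish the rest, since then $2(s+1)C_s - (s+2)(C_{s/2}+2) \ge 2(s+1)C_s - 2(s+2)C_s = -2C_s$ is too weak — so instead I would keep $C_{s/2} \le C_{\lceil s/2 \rceil} \le C_s / $ (a large factor) and use that $C_{s/2}$ is exponentially smaller than $C_s$, making the inequality slack by a wide margin.

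The main obstacle is purely bookkeeping in that last step: the parity split (the $b_{s/2}$ / $C_{s/2}$ term is present only for even $s$), the off-by-one shifts between the $b$-recurrence (which decomposes into pieces summing to $s$) and the Catalan recurrence (which decomposes into pieces summing to $s-1$), and making sure the constant $+1$ is absorbed. None of these is deep; the cleanest route is probably to prove the slightly stronger statement $b_s \le C_s$ with room to spare — e.g. that $C_s - b_s$ is nonneg\-ative and grows — so the induction closes without delicate estimates. If one wants to avoid the Catalan convolution manipulation entirely, an alternative is to define $\tilde b_s$ by $\tilde b_s = \frac12 \sum_{i} \tilde b_{s-i}\tilde b_i + \frac12 \tilde b_{s/2} + 1$ with the same base cases and compare $B(z)$ against $C(z) = \frac{1 - \sqrt{1 - 4z}}{2z}$ at the level of generating functions, showing coefficientwise domination $[z^s]C(z) \ge [z^s]B(z)$ from $C(z) = z C(z)^2 + 1$ versus eq.~\eqref{eq:final_generating}; but the direct induction above is more elementary and self-contained.
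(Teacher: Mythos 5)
Your overall strategy---strong induction on $s$, replacing each $b_k$ by $C_k$ in eq.~\eqref{eq:recursion} and then invoking the Catalan convolution $\sum_{i=0}^{s}C_iC_{s-i}=C_{s+1}$---is genuinely different from the paper's argument and is viable in principle: the inequality you reduce to, namely $C_{s+1}+C_{s/2}+2\le 4C_s$ (with $C_{s/2}$ read as $0$ for odd $s$), is indeed true for all $s\ge 2$, with equality at $s=2$. But your write-up does not actually establish it, and this is where the gap lies. First, there is an algebra slip: since $4C_s-C_{s+1}=\frac{6}{s+2}C_s$, clearing the denominator gives $0\le 6C_s-(s+2)\big(C_{s/2}+2\big)$, not $0\le 2(s+1)C_s-(s+2)\big(C_{s/2}+2\big)$. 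With the correct constant $6$, the crude bounds $C_{s/2}\le C_s$ or $C_{s/2}+2\le 2C_s$ are hopeless, since $(s+2)C_s$ dominates $6C_s$; you notice this and retreat to ``$C_{s/2}$ is exponentially smaller than $C_s$,'' but that is precisely the missing step. To close the induction you must prove an explicit estimate such as $(s+2)\big(C_{s/2}+2\big)\le 6C_s$ for $s\ge 3$ (for odd $s$ this is just $C_s\ge (s+2)/3$), e.g.\ using $C_{m+1}/C_m=\frac{2(2m+1)}{m+2}\ge 3$ for $m\ge 4$, which gives $C_s\ge 3^{s/2}C_{s/2}$ for even $s\ge 8$, plus a hand check of the remaining small cases; note also that the target inequality is tight at $s=2$, so ``slack by a wide margin'' is not uniform and the induction must start above the directly verified base cases. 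As written, the inductive step is not closed.

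For comparison, the paper's proof avoids any quantitative estimate on Catalan growth. It first proves the auxiliary bound $\frac{1}{2}b_{s+1}\le C_s$ by a separate induction, and then in the main induction peels off one term of the convolution so that the three ``spare'' contributions ($\frac{1}{2}b_Nb_1$, $\frac{1}{2}b_{(N+1)/2}$, and the $+1$) are together absorbed into the single extra product $C_NC_0$ in $\sum_{i=1}^{N+1}C_{N+1-i}C_{i-1}=C_{N+1}$, using only monotonicity of the Catalan numbers. If you prefer to keep your more direct route, supply the explicit inequality above; otherwise the auxiliary-bound device is the cleaner repair.
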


\begin{proof}

To prove $b_s\leq C_s$ for all $s\geq 1$, we recall the recursion for the Catalan numbers, $C_{s}=\sum_{k=0}^{s-1} C_k C_{s-1-k} = \sum_{k=1}^s C_{k-1} C_{s-k}$ with $C_0=1$ \cite[p.~26]{drmota}. We first prove inductively that $\frac{1}{2}b_{s+1}\leq C_s$ for all $s \geq 0$.

For the base case $s=0$, we have $\frac{1}{2} = \frac{1}{2}b_1 \leq C_0 = 1$; for $s=1$, we have $1 = \frac{1}{2}b_2 \leq C_1 = 1$. For the inductive step, suppose $\frac{1}{2}b_{s+1}\leq C_{s}$ for all $s < N$, $N \geq 2$. By the recursion in eq.~\eqref{eq:recursion} and the inductive assumption,
\begin{eqnarray}
\frac{1}{2}b_{N+1} & = & \bigg(\sum_{i=2}^{N} \frac{1}{2} b_{N+1-i}\frac{1}{2}b_i \bigg) + \frac{1}{4} b_{N} b_1 + \frac{1}{4} b_{(N+1)/2} + \frac{1}{2} \nonumber \\ 
&\leq & \bigg(\sum_{i=2}^N C_{N-i}C_{i-1} \bigg) + \frac{1}{4} b_{N} b_1 
+ \frac{1}{4}b_{(N+1)/2} + \frac{1}{2}. \nonumber 
\end{eqnarray}

By the inductive assumption, noting $b_1 = C_0 = 1$, $\frac{1}{4}b_N b_1 \leq \frac{1}{2} C_{N-1} C_0$. Also by the inductive assumption, $\frac{1}{4}b_{(N+1)/2} + \frac{1}{2} \leq \frac{1}{2}C_{(N-1)/2} + \frac{1}{2}$. The Catalan numbers are strictly monotonically increasing for $N \geq 1$, so that $\frac{1}{2}C_{(N-1)/2} + \frac{1}{2} \leq \frac{1}{2}C_{N-1} = \frac{1}{2}C_{N-1} C_0$ for $N \geq 2$.

We then have $\frac{1}{4} b_{N} b_1 + \frac{1}{4}b_{(N+1)/2} + \frac{1}{2}\leq C_{N-1} C_0$ and $\frac{1}{2}b_{N+1} \leq \sum_{i=1}^N C_{N-i}C_{i-1} = C_N$, and the induction is complete.

To complete the proof that $b_s\leq C_s$ for $s \geq 1$, we proceed again by induction. We note that the result holds in the base case $s=1$ ($1 = b_1 \leq C_1 = 1$) and $s=2$ ($2 = b_2 \leq C_2 = 2$), and suppose that it holds for all $s \leq N$, $N \geq 2$. Then 
\begin{eqnarray*} 
b_{N+1} &= & \bigg( \sum_{i=2}^{N} \frac{1}{2} b_{N+1-i}b_i \bigg) + \frac{1}{2} b_N b_1 + \frac{1}{2} b_{(N+1)/2} + 1\\
&\leq & \bigg( \sum_{i=2}^{N} C_{N+1-i}C_{i-1} \bigg) + \frac{1}{2} b_N b_1 + \frac{1}{2} b_{(N+1)/2} + 1. 
\end{eqnarray*}
We have $\frac{1}{2} b_N b_1 \leq \frac{1}{2} C_N C_0$ by the inductive hypothesis, and $\frac{1}{2}b_{(N+1)/2}+1 \leq  C_{(N-1)/2} + 1 \leq C_N =  C_N C_0$ by the earlier $\frac{1}{2}b_{s+1} \leq C_s$ and the strict monotonicity of the $C_N$ for $N \geq 2$. Then $\frac{1}{2} b_N b_1 + \frac{1}{2} b_{(N+1)/2} + 1 \leq\frac{3}{2} C_N C_0$ and $b_{N+1} \leq \big(\sum_{i=2}^N C_{N+1-i} C_{i-1}\big)+\frac{3}{2}C_NC_0 = \big(\sum_{i=1}^{N} C_{N+1-i} C_{i-1} \big)+\frac{1}{2}C_NC_0 < \big(\sum_{i=1}^{N+1} C_{N+1-i} C_{i-1}\big) = C_{N+1}$.
\end{proof}
\begin{corollary}
    The radius of convergence $\beta$ for $B(z)$ is positive, and in particular, $\frac{1}{4} \leq \beta \leq \rho$.
    \label{cor: beta geq 1/4}
\end{corollary}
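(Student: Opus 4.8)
The plan is to combine Lemma~\ref{lem:bncn} with the classical fact that the Catalan generating function $C(z) = \sum_{s \geq 0} C_s z^s = \frac{1 - \sqrt{1 - 4z}}{2z}$ has radius of convergence exactly $\tfrac{1}{4}$. First I would note that the coefficients $b_s$ are nonnegative, so $B(z) = \sum_{s \geq 0} b_s z^s$ is a power series with nonnegative coefficients, and that Lemma~\ref{lem:bncn} gives the coefficientwise bound $0 \leq b_s \leq C_s$ for every $s \geq 1$ (and $b_0 = 0 \leq 1 = C_0$ as well). Consequently, for any real $z$ with $0 \leq z < \tfrac{1}{4}$ we have the term-by-term comparison $\sum_{s=0}^{\infty} b_s z^s \leq \sum_{s=0}^{\infty} C_s z^s < \infty$, so $B(z)$ converges on $[0, \tfrac{1}{4})$; since the radius of convergence of a series with nonnegative coefficients is the supremum of the $z > 0$ at which it converges (Pringsheim), this yields $\beta \geq \tfrac{1}{4}$, and in particular $\beta > 0$.

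For the upper bound I would simply invoke the comparison with $U(z)$ already recorded before the statement: the rooted binary perfect phylogenies with sample size $s$ include the $u_s$ rooted binary unlabeled trees with $s$ leaves, so $b_s \geq u_s$ for all $s$, hence $B(z) \geq U(z)$ on $(0,\rho)$ and $B$ cannot converge beyond $\rho$, i.e.\ $\beta \leq \rho$. Combining the two bounds gives $\tfrac{1}{4} \leq \beta \leq \rho$.

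There is essentially no hard step here; the substantive work was done in Lemma~\ref{lem:bncn}. The only point requiring a word of care is the elementary comparison principle—that a nonnegative power series dominated coefficientwise by a series convergent on $[0,r)$ is itself convergent on $[0,r)$—together with the identification of the radius of convergence with the first real singularity, which is standard for series with nonnegative coefficients.
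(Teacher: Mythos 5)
Your proof is correct and follows essentially the same route as the paper: the lower bound $\beta \geq \tfrac{1}{4}$ via the coefficientwise comparison $b_s \leq C_s$ from Lemma~\ref{lem:bncn} and the radius $\tfrac{1}{4}$ of the Catalan generating function, and the upper bound $\beta \leq \rho$ via $b_s \geq u_s$ and the convergence radius of $U(z)$. Your version merely spells out the comparison principle and the Pringsheim-type identification a bit more explicitly than the paper does.
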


\begin{proof}
We have seen that $\beta \leq \rho \approx 0.4027$ because the rooted binary perfect phylogenies include the rooted binary unalabeled trees, whose generating function has radius of convergence $\rho$. 

For the lower bound, because the Catalan generating function $C(z)=(1- \sqrt{1-4z})/(2z)$ has radius of convergence $\frac{1}{4}$, it follows from Lemma~\ref{lem:bncn} that $B(z) < C(z)$ for $0 < z < \frac{1}{4}$, so that the generating function $B(z)$ for the smaller sequence $\{b_n\}$ has radius of convergence $\beta \geq \frac{1}{4}$.
\end{proof}
\begin{theorem}
\label{thm: binary asymptotics}
The number $b_s$ of rooted binary perfect phylogenies with sample size $s$ has asymptotic growth 
\begin{equation} 
b_s \sim [\gamma/(2\sqrt{\pi})](1/r)^s s^{-3/2} \approx\frac{0.3519 (3.2599)^s}{s^{3/2}},
\label{eq:approx}
\end{equation}
where $\gamma \approx 1.2476$ and $r \approx 0.3068$ are constants. 
\end{theorem}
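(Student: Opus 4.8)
The plan is to read the functional equation of Proposition~\ref{prop:Bs gen func} as a quadratic in $B(z)$, show that $B$ acquires a square-root singularity at its radius of convergence, and then invoke singularity analysis \cite{flajolet_sedgewick_2009}; this is essentially the route used for the Wedderburn--Etherington numbers, and it adapts with only cosmetic changes. Set $H(z)=\tfrac12 B(z^2)+\tfrac{z}{1-z}$, so that eq.~\eqref{eq:final_generating} becomes $\tfrac12 B(z)^2-B(z)+H(z)=0$ and, since $B(0)=H(0)=0$,
\[
B(z)=1-\sqrt{\,1-2H(z)\,}.
\]
The point of this rewriting is that $H$ is analytic on a region strictly larger than the disk of convergence of $B$. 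By Corollary~\ref{cor: beta geq 1/4}, $\tfrac14\le\beta\le\rho<1$; hence $\sqrt{\beta}>\beta$, so $B(z^2)$ is analytic for $|z|<\sqrt\beta$, and $\tfrac{z}{1-z}$ is analytic for $|z|<1$, so $H$ is analytic on an open neighborhood of the closed disk $\{|z|\le\beta\}$, in particular at $z=\beta$ and (since $\beta^2<\beta$) at $z=\beta^2$, where $B'(\beta^2)$ is therefore finite.

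First I would pin down the dominant singularity. By Pringsheim's theorem, $B$ (which has nonnegative coefficients) is singular at $z=\beta$; but $H$ is analytic there, and $B=1-\sqrt{1-2H}$ can be singular at a point where $H$ is analytic only where $1-2H$ vanishes, so $2H(\beta)=1$. Write $r=\beta$. Every coefficient of $H$ beyond the constant term is positive---those of $\tfrac{z}{1-z}$ already are---so $H$ is strictly increasing on $[0,\sqrt\beta)$ and $H'(r)=rB'(r^2)+(1-r)^{-2}>0$, whence $1-2H$ has a simple zero at $r$ and $B$ genuinely has a square-root branch point there. The same positivity of coefficients gives $\operatorname{Re}H(z)<H(r)$ for $|z|=r$ with $z\ne r$, so $1-2H$ has no other zero on $|z|=r$; it has none inside $|z|<r$ either, because there $|H(z)|\le H(|z|)<\tfrac12$. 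Since $H$ is in addition analytic on the larger disk $|z|<\sqrt r$, the standard compactness argument extends $B=1-\sqrt{1-2H}$ analytically to a $\Delta$-domain, with $z=r$ its only singularity on the circle $|z|=r$.

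The remaining steps are routine. The local expansion at $z=r$ is
\[
1-2H(z)=2rH'(r)\Bigl(1-\tfrac{z}{r}\Bigr)+O\!\Bigl(\bigl(1-\tfrac{z}{r}\bigr)^{2}\Bigr),
\qquad
B(z)=1-\gamma\sqrt{1-\tfrac{z}{r}}+O\!\bigl(1-\tfrac{z}{r}\bigr),
\]
with $\gamma=\sqrt{2rH'(r)}=\sqrt{\,2r^2B'(r^2)+2r(1-r)^{-2}\,}$. Transferring this expansion termwise (singularity analysis, with $[z^s](1-z)^{1/2}\sim-\tfrac12\pi^{-1/2}s^{-3/2}$ after the substitution $z\mapsto z/r$) yields $b_s\sim[\gamma/(2\sqrt\pi)]r^{-s}s^{-3/2}$, which is eq.~\eqref{eq:approx}. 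To obtain the numerical constants: since $r^2\approx0.094$ lies well inside the disk of convergence of $B$, the power series for $B(z^2)$ and $B'(z^2)$ converge rapidly, so solving $2H(r)=1$ by iteration gives $r\approx0.3068$, hence $1/r\approx3.2599$, and then $\gamma\approx1.2476$ and $\gamma/(2\sqrt\pi)\approx0.3519$. I expect the only delicate point to be the certification that $z=r$ is the sole singularity on $|z|=r$ and that $B$ is $\Delta$-continuable; but since the quadratic reformulation displays $B$ explicitly as $1$ minus the square root of a function analytic past the circle of convergence, these checks collapse to the positivity/aperiodicity facts used above.
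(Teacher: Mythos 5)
Your proof is correct, and it reaches the same characteristic equation $\tfrac12 B(r^2)+\tfrac{r}{1-r}=\tfrac12$ and the same constant $\gamma=\sqrt{2r\big[rB'(r^2)+(1-r)^{-2}\big]}$ as the paper, but by a genuinely different route. The paper treats eq.~\eqref{eq:final_generating} as an implicit equation $B=G(z,B)$ with $G(z,w)=\tfrac12 w^2+\tfrac12 B(z^2)+\tfrac{z}{1-z}$ and invokes the smooth implicit-function schema (Theorem VII.3 of \cite{flajolet_sedgewick_2009}) as a black box: it checks analyticity, nonnegativity of coefficients, and solvability of the characteristic system $G(r,1)=1$, $G_w(r,1)=1$, and then reads off $b_s\sim[\gamma/(2\sqrt\pi)]r^{-s}s^{-3/2}$. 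You instead solve the quadratic explicitly, $B=1-\sqrt{1-2H}$ with $H(z)=\tfrac12 B(z^2)+\tfrac{z}{1-z}$, and carry out the singularity analysis by hand: Pringsheim plus analyticity of $H$ beyond $\beta$ (the same bootstrapping fact, via Corollary~\ref{cor: beta geq 1/4} and $\beta\le\rho<1$, that the paper uses to verify condition (i)) forces $2H(\beta)=1$; positivity of the coefficients of $H$ gives a simple zero of $1-2H$ at $r=\beta$, uniqueness of the dominant singularity on $|z|=r$, and $\Delta$-continuability; transfer then yields the stated asymptotics. In effect you unroll the proof of the schema in this special case, which buys a self-contained argument (modulo the transfer theorem) and makes the square-root nature of the singularity visible, at the cost of having to certify the branch-point and $\Delta$-domain details yourself; the paper's route is shorter and delegates exactly those checks to the cited theorem. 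Two small points worth making explicit in your write-up: in the Pringsheim step, note that $(1-B(x))^2=1-2H(x)\le 1$ on $[0,\beta)$ keeps $B(\beta^-)$ finite and on the correct branch, so that $1-2H(\beta)>0$ really would permit analytic continuation past $\beta$; and the local error term in your expansion of $B$ is in fact $O\big((1-z/r)^{3/2}\big)$, though the weaker $O(1-z/r)$ you state is already sufficient for the transfer. The numerical evaluation of $r\approx 0.3068$ and $\gamma\approx 1.2476$ by truncating the series for $B(r^2)$ and $B'(r^2)$ is identical in both treatments.
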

Because $B(z)$ is written in terms of $B(z^2)$ in Proposition~\ref{prop:Bs gen func}, the proof of Theorem~\ref{thm: binary asymptotics} relies on methods for generating functions defined implicitly. We use the smooth implicit-function schema in Theorem VII.3 from \cite[pp.~467-468]{flajolet_sedgewick_2009}. According to this theorem, we begin with an implicitly defined generating function $y(z) = \sum_{n \geq 0} y_n z^n$ that takes the form $y(z)=G\big(z,y(z) \big)$. Suppose that $y(z)$ is analytic at 0, $y_0 = 0$, and $y_n \geq 0$. Suppose also that
\begin{enumerate}[i.]
\item  $G(z,w) =\sum_{m,n\geq 0}g_{m,n} z^m w^n $ is analytic in a neighborhood of $(z,w)=(0,0)$.
\item $G(z,w)$ has coefficients $g_{m,n} \geq 0$ with $g_{0,0}=0$, $g_{0,1}\neq 1$, and $g_{m,n}>0$ for some $(m,n)$ with $n\geq2$. 
\item There exists some point $(z,w)=(r,s)$ in the analytic portion of the domain around $(0,0)$, such that $G(r,s)=s$ and $G_w(r,s)=1$.
\end{enumerate}
Then $[z^n]y(z)$ grows with $[\gamma/(2\sqrt \pi)](1/r)^n n ^{-3/2}$, where $\gamma=\sqrt{{2rG_z(r,s)}/{G_{ww}(r,s)}}$.

\begin{proof}
We verify that $B(z)$ belongs to the smooth implicit-function schema. Eq.~\eqref{eq:final_generating} gives the implicitly defined generating function. Write $G(z,w)= \frac{1}{2}w^2+\frac{1}{2}B(z^2) + \frac{z}{1-z}=\sum_{m,n \geq 0} g_{m,n}z^mw^n$. We prove $G(z,w)$ satisfies the required conditions.
\begin{enumerate}[i.]
\item We show $G(z,w)$ is analytic in a neighborhood of $(0,0)$. First note that $w^2/2$ is analytic for $|w| <\infty$. Next, for $\beta$ the radius of convergence of $B(z)$, $\frac{1}{2}B(z^2)$ is analytic for $|z|<\sqrt{\beta}$. Finally, $\frac{z}{1-z}$ is analytic for $z\neq 1$. Hence, noting $\beta < 1$, $G(z,w)$ is analytic for $|w|<\infty$ and $|z|<\sqrt{\beta}$.

\item  For the conditions on $g_{m,n}$, we examine the expansion of $G(z,w)$, and observe $g_{0,0} = 0$, $g_{0,1} = 0 \neq 1$, and $g_{0,2} = \frac{1}{2} > 0$. Each $g_{m,n}$ satisfies $g_{m,n} \geq 0$ for $m,n \geq 0$, as $B(z^2)$ and $z/(1-z)$ have nonnegative coefficients.

\item We show there exists a solution to the characteristic system 
$$G_w(z,w)=1, \, G(z,w)=w.$$ 
We first note that $G_w(z,w)=w$, so $G_w(r,s)=1$ is satisfied for $s=1$. Thus, we need only show that there exists $r$ such that
\begin{equation}
G(r,1)=\frac{1}{2}+\frac{1}{2}B(r^2)+\frac{r}{1-r} = 1.
\label{eq:Gr1}
\end{equation}
Consider $G(z,1)$ for $z$ real. It suffices to show that a real solution to eq.~\eqref{eq:Gr1} exists, a value of $r$ with $|r|\leq \sqrt{\beta}$ such that $G(r,1)=\frac{1}{2}$ in the domain where $G(z,w)$ is analytic. Restricting our attention to the positive, real line, we note that:
\begin{enumerate}[1.]
\item $G(z,1)$ is a monotonically increasing function for real $z>0$, as it is a sum of power series with nonnegative coefficients.
\item $G(0,1)=\frac{1}{2}$, as neither $B(z^2)$ nor $\frac{z}{1-z}$ has a constant term.
\item $G(\frac{1}{3},1)> 1$, as $\frac{1}{3}/(1-\frac{1}{3})=\frac{1}{2}$ and $B\big((\frac{1}{3})^2\big)>0$ (because $B(z^2)$ is strictly monotonically increasing for real $z>0$ and  $B(0)=0$). Note also that $\frac{1}{3} < \frac{1}{2} \leq \sqrt{\beta}$ by Corollary~\ref{cor: beta geq 1/4}.
\end{enumerate}
We conclude that there exists some $r$, $0 \leq r < \frac{1}{3} < \sqrt{\beta}$ such that $G(r,1)=1$.
\end{enumerate}

\vskip .2cm

We have therefore shown that $B(z)$ belongs to the smooth implicit function schema. The smooth implicit function schema tells us that the same $r$ that solves the characteristic system is indeed the radius of convergence of $B(z)$. With $s=1$, we solve eq.~\eqref{eq:Gr1} for $r$ numerically. We approximate $B(r^2)$ using the terms in Table~\ref{table:perfect}: $B(r^2) \approx \sum_{i=1}^{14} b_i (r^2)^i = 1r^2 + 2 r^4 + 3 r^6 + 7 r^8 + \ldots +109419 r^{28}$. Numerically solving for the positive, real root, we obtain $r\approx 0.306760104888$.

To compute the constant $\gamma$, we use $G_z(r,s)= rB'(r^2) + {1}/{(1-r)^2}$ and $G_{ww}(r,s)=1$. We approximate $B'(r^2)$ by the terms in Table~\ref{table:perfect}, 
\begin{equation*}
B'(r^2) \approx \sum_{i=1}^{14} ib_{i} (r^2)^{i-1} = (1\cdot 1) r^0 + (2\cdot2)r^2 + (3\cdot3)r^4 + (4\cdot7)r^6  + \ldots + (14\cdot109419)r^{26}.
\end{equation*}
We obtain $B'(r^2) \approx 1.4871$. Then
\begin{eqnarray*}
\gamma & = & \sqrt{2r \bigg[{r B'(r^2) + \frac{1}{(1-r)^2} \bigg]}} \approx 1.2476, \nonumber
\end{eqnarray*}
from which $b_s \sim [{\gamma}/{(2\sqrt{\pi})}](1/r)^s s^{-3/2} \approx {0.3519 (3.2599)^s}/{s^{3/2}}$.
\end{proof}

Figure~\ref{fig:3approx} plots the logarithm of the exact number of rooted binary perfect phylogenies $b_s$ from eq.~\eqref{eq:enumeration_alln} alongside the logarithm of the asymptotic growth from Theorem~\ref{thm: binary asymptotics}. We can observe, for example, that the asymptotic approximation $0.3519(3.2599)^{60}/60^{3/2}$ gives $4.6930\times 10^{27}$; the exact value is $4,753,678,474,171,125,902,623,929,051$.

\begin{figure}[tbp]
\centering
\includegraphics[width=7.4cm]{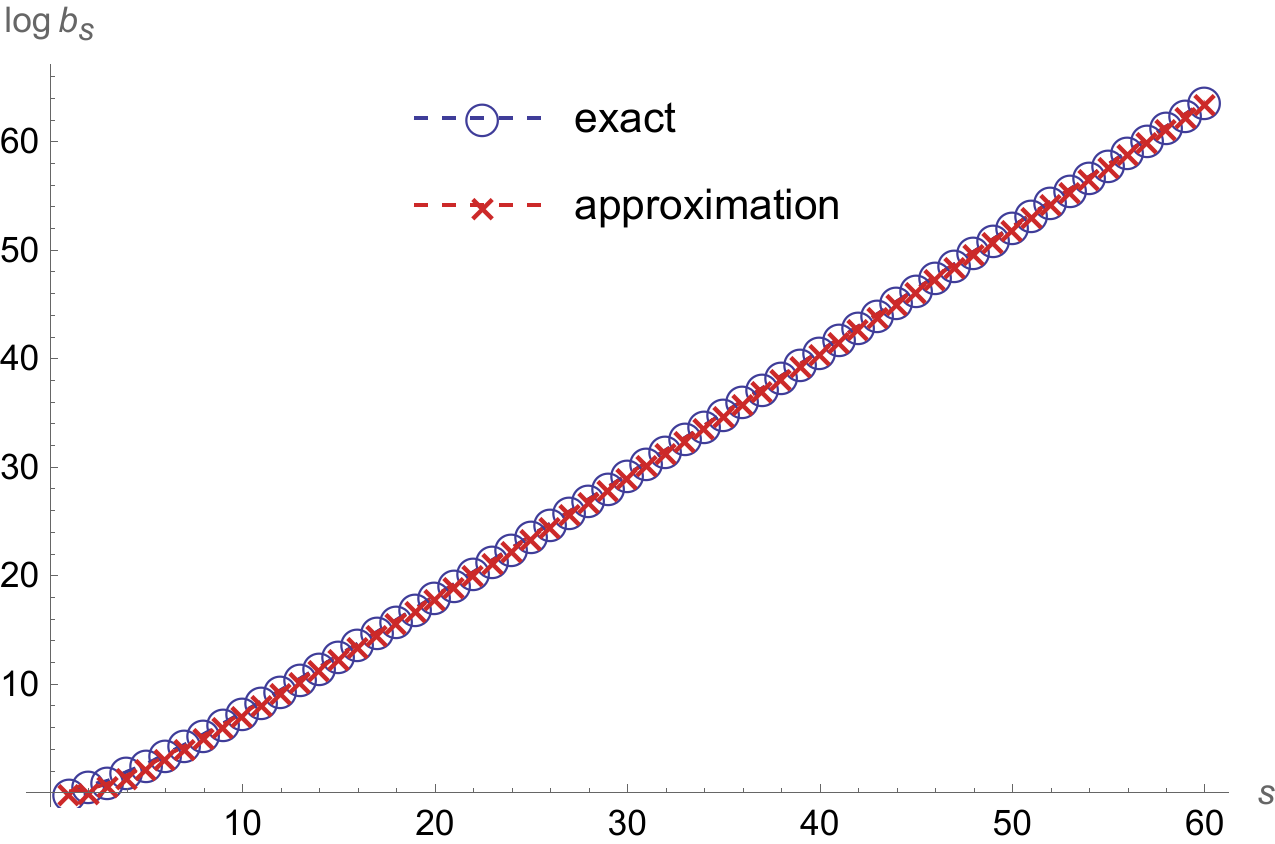}
\vspace{-.3cm}
\caption{The number $b_s$ of perfect phylogenies with sample size $s$. Exact values are computed from eq.~\eqref{eq:enumeration_alln}. The asymptotic approximation is computed from eq.~\eqref{eq:approx}.} 
\label{fig:3approx}
\end{figure}

\section{Rooted binary perfect phylogenies with sample size \texorpdfstring{$s$}{s} and \texorpdfstring{$n$}{n} leaves}\label{sec:s,n}

Having enumerated all rooted binary perfect phylogenies with sample size $s$, we now decompose the enumeration across perfect phylogenies with different numbers of leaves. A (non-empty) perfect phylogeny with sample size $s$ must possess a number of leaves in $[1,s]$. In the case that the number of leaves $n$ is equal to $s$, the rooted binary perfect phylogenies are simply rooted binary unlabeled trees, as each leaf has multiplicity 1. 

\subsection{Enumeration}

We generalize to consider all ordered pairs $(s,n)$ with $1 \leq n \leq s$. Let $b_{s,n}$ be the number of rooted binary perfect phylogenies with sample size $s$ and $n$ leaves, where $b_{s,n}=0$ if $s<n$, or $s\notin \mathbb{N}$, or $n\notin \mathbb{N}$.

\begin{proposition} The number $b_{s,n}$ of rooted binary perfect phylogenies with sample size $s$ and $n$ leaves, $1 \leq n \leq s$, satisfies \\
(i) $b_{s,1}=1$ for all $s \geq 1$. \\
(ii) For $(s,n)$ with $s \geq n \geq 2$,
\label{prop:b_s,n recursion}
\begin{equation}
b_{s,n}=
\begin{cases}
 \sum\limits_{j=1}^{n-1} \sum\limits_{i=j}^{s-n+j} \frac{1}{2} b_{s-i,n-j} \, b_{i,j}, & s \textnormal{ odd or } n\textnormal{ odd,} \\
\bigg( \sum\limits_{j=1}^{n-1} \sum\limits_{i=j}^{s-n+j} \frac{1}{2} b_{s-i,n-j} \, b_{i,j}\bigg)+\frac{1}{2}b_{s/2,n/2}, & s \textnormal{ even and } n \textnormal{ even}.\\
\end{cases} 
\label{eq: binary_rec}
\end{equation}
\end{proposition}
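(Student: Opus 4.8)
The plan is to establish the recursion by decomposing a perfect phylogeny at its root, exactly as in the derivations of eqs.~\eqref{eq:wedderburn_rec} and~\eqref{eq:enumeration_alln}, but now keeping track of both the sample size and the number of leaves in each of the two subtrees hanging from the root. Part (i) requires no work: a rooted binary perfect phylogeny with a single leaf is forced to be the trivial phylogeny consisting of one leaf of multiplicity $s$, so $b_{s,1}=1$ for all $s\geq 1$.

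For part (ii), fix $(s,n)$ with $s\geq n\geq 2$. Since $n\geq 2$, no perfect phylogeny counted by $b_{s,n}$ is trivial, so its root has exactly two child subtrees, each itself a non-empty rooted binary perfect phylogeny; say one has $j$ leaves and sample size $i$, and hence the other has $n-j$ leaves and sample size $s-i$. Both subtrees are non-empty, forcing $1\leq j\leq n-1$; and because every leaf carries multiplicity at least $1$, a subtree with $k$ leaves has sample size at least $k$, so $j\leq i$ and $n-j\leq s-i$, i.e.\ $i\leq s-n+j$. These are precisely the ranges $1\leq j\leq n-1$ and $j\leq i\leq s-n+j$ appearing in eq.~\eqref{eq: binary_rec}. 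One checks that the index map $(i,j)\mapsto(s-i,n-j)$ is an involution of this index set.

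The remainder is the standard left--right symmetry bookkeeping. For an unordered decomposition into two subtrees of distinct ``types'' --- meaning $(i,j)\neq(s-i,n-j)$ --- the number of perfect phylogenies realizing it equals $b_{i,j}\,b_{s-i,n-j}$, and this decomposition is met twice in the double sum, at $(i,j)$ and at $(s-i,n-j)$, each time weighted by $\tfrac12$; the two halves sum to the correct count. A self-paired type $(i,j)=(s-i,n-j)$ exists if and only if $n=2j$ and $s=2i$, that is, $n$ and $s$ are both even with $j=n/2$ and $i=s/2$. In that case the number of perfect phylogenies whose two subtrees both have type $(s/2,n/2)$ is $\binom{b_{s/2,n/2}}{2}+b_{s/2,n/2}=\tfrac12 b_{s/2,n/2}^2+\tfrac12 b_{s/2,n/2}$, whereas the double sum contributes only the single term $\tfrac12 b_{s/2,n/2}^2$; the discrepancy $\tfrac12 b_{s/2,n/2}$ is exactly the correction added in the even--even branch of eq.~\eqref{eq: binary_rec}, and when $s$ or $n$ is odd there is no self-paired type and no correction is needed, which explains the case split.

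The only genuine subtlety --- and the step I would be most careful about --- is disentangling the two sources of ``doubling'': the global factor $\tfrac12$ from treating these as non-plane trees (always present) versus the possibility of the two root subtrees coinciding (present only when $s$ and $n$ are both even), while simultaneously checking that the summation limits encode the constraint ``a $k$-leaf subtree has sample size $\geq k$'' at both ends. Everything else is routine rearrangement. I would finish with a quick sanity check at $n=2$, where the formula should count cherries with leaf-multiplicity pairs $\{i,s-i\}$ and indeed evaluates to $\floor{s/2}$.
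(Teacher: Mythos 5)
Your proof is correct and follows essentially the same route as the paper's: decompose at the root, index one subtree by its sample size $i$ and leaf count $j$ with the ranges $1\leq j\leq n-1$, $j\leq i\leq s-n+j$, use the factor $\tfrac12$ for the left--right transposition, and add the correction $\tfrac12 b_{s/2,n/2}$ exactly when $s$ and $n$ are both even via the $\binom{b_{s/2,n/2}}{2}+b_{s/2,n/2}$ count. Your explicit involution $(i,j)\mapsto(s-i,n-j)$ and the $n=2$ sanity check are just slightly more detailed versions of the same bookkeeping.
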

In eq.~\eqref{eq: binary_rec}, the index $i$ counts the sample size assigned to the right subtree and the index $j$ counts its number of leaves. The left subtree then has sample size $s-i$, with $n-j$ leaves. 

We observe that in the case $s=n$, in which a perfect phylogeny has all leaf multiplicities equal to 1, the recursion recovers eq.~\eqref{eq:wedderburn_rec}. In this case, we must have $i=j$. Because $s=n$, the cases become cases for $n$ odd or $n$ even. Because $n=s$ and $j=i$, we obtain for $n \geq 3$ odd:
\begin{equation*}
b_{n,n} = \sum\limits_{j=1}^{n-1} \frac{1}{2}b_{n-j,n-j} b_{j,j} = \sum\limits_{j=1}^{n-1} \frac{1}{2}u_{n-j} \, u_{j} = u_n.
\end{equation*}
The case for even $n \geq 2$ reduces to 
\begin{eqnarray*} 
b_{n,n} & = & \bigg( \sum\limits_{j=1}^{n-1} \frac{1}{2}b_{n-j,n-j} \, b_{j,j} \bigg) +\frac{1}{2} b_{n/2,n/2} = \bigg( \sum\limits_{j=1}^{n-1} \frac{1}{2}u_{n-j} \, u_{j} \bigg) +\frac{1}{2} u_{n/2} = u_n.
\end{eqnarray*}

\begin{proof}
We count rooted binary perfect phylogenies with sample size $s$ and $n$ leaves by considering all partitions of the sample and leaves into left and right subtrees. We index the sample size of the right subtree by $i$ and the number of leaves of the right subtree by $j$. 

The right subtree has sample size $i \geq j$. Because the left subtree has $n-j$ leaves, it has sample size at least $n-j$, so that the right subtree has sample size at most $i \leq s - (n-j)$.

Given the number of leaves $j$ for the right subtree, $1 \leq j \leq n-1$, and sample size $i$, $j \leq i \leq s-n+j$, if $j \neq \frac{n}{2}$ or $i \neq \frac{s}{2}$ or both, then we count $b_{s-i,n-j} \, b_{i,j}$ perfect phylogenies: $b_{i,j}$ for the right subtree and $b_{s-i,n-j}$ for the left. Each distinct perfect phylogeny is obtained twice, the second time with the left and right subtrees transposed.

If the sample size $s$ and number of leaves $n$ are both even, then we count both the $\binom{ b_{s/2,n/2}}{2} $ perfect phylogenies with two distinct subtrees of sample size $\frac{s}{2}$ and $\frac{n}{2}$ leaves as well as the $b_{s/2,n/2}$ perfect phylogenies with two identical subtrees, as in eq.~\eqref{eq:enumeration_alln} and eq.~\eqref{eq: binary_rec}.
\end{proof}

Figure~\ref{fig:4} shows an example of the enumeration, considering all possible rooted binary perfect phylogenies with $(s,n)=(8,6)$. Table~\ref{table:binary_counts} gives the values of $b_{s,n}$ for small $(s,n)$, illustrating that the sums $\sum_{n=1}^s b_{s,n}$ agree with the values obtained for $b_s$ via Proposition~\ref{prop:bs recursion}.


For fixed small $n$, $b_{s,n}$ can be stated in closed form. First, $b_{s,1}=1$ for $s \geq 1$. We obtain a sequence of corollaries of Proposition~\ref{prop:b_s,n recursion}.
\begin{corollary} 
For $s \geq 2$, the number $b_{s,2}$ of rooted binary perfect phylogenies with $n=2$ leaves is
\label{cor:s2}
\begin{equation}
b_{s,2}=\bigfloor{ \frac{s}{2} }.
\label{eq:s2}
\end{equation}
\end{corollary}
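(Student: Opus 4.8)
The plan is to specialize the recursion of Proposition~\ref{prop:b_s,n recursion} to the case $n = 2$ and simplify; this requires essentially no new ideas, so I will carry it out directly, splitting on the parity of $s$. (Equivalently one could argue bijectively, as I note at the end.)

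First I would observe that in eq.~\eqref{eq: binary_rec} with $n = 2$, the outer index $j$ ranges only over $j = 1$, so the double sum collapses to the single sum $\sum_{i=1}^{s-1} \tfrac12\, b_{s-i,1}\, b_{i,1}$, the range $j \le i \le s-n+j$ having become $1 \le i \le s-1$. By part (i) of Proposition~\ref{prop:b_s,n recursion}, $b_{m,1} = 1$ for every $m \ge 1$, so each summand equals $\tfrac12$ and the sum equals $\tfrac{s-1}{2}$.

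Next I would split on the parity of $s$. Since $n = 2$ is even, the two branches of eq.~\eqref{eq: binary_rec} are exactly ``$s$ odd'' versus ``$s$ even.'' For odd $s \ge 3$ the first branch gives $b_{s,2} = \tfrac{s-1}{2} = \floor{s/2}$. For even $s \ge 2$ the second branch adds the correction term $\tfrac12\, b_{s/2,\,2/2} = \tfrac12\, b_{s/2,1} = \tfrac12$, so $b_{s,2} = \tfrac{s-1}{2} + \tfrac12 = \tfrac{s}{2} = \floor{s/2}$. Together these cases establish eq.~\eqref{eq:s2} for all $s \ge 2$.

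I do not anticipate any real obstacle here: the only point needing a moment's care is confirming that the ``identical-subtrees'' correction $\tfrac12$ appears precisely when $s$ is even, which is the same $\tfrac12$-overcounting phenomenon already built into Proposition~\ref{prop:b_s,n recursion}. As a sanity check, the combinatorial meaning agrees: a perfect phylogeny with $n = 2$ leaves is a single cherry whose two leaf multiplicities are positive integers summing to $s$, and the number of unordered such pairs is exactly $\floor{s/2}$; this also matches the entries $b_{2,2} = 1$, $b_{3,2} = 1$, $b_{4,2} = 2$ in Table~\ref{table:binary_counts}.
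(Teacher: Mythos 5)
Your proposal is correct and follows essentially the same route as the paper: specialize Proposition~\ref{prop:b_s,n recursion} to $n=2$, use $b_{m,1}=1$ to evaluate the sum as $\tfrac{s-1}{2}$, and split on the parity of $s$ to absorb the $\tfrac12 b_{s/2,1}$ correction, giving $\floor{s/2}$ in both cases. The closing bijective sanity check (unordered pairs of positive integers summing to $s$) is a pleasant extra but not part of the paper's argument.
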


\begin{table}[tb]
\centering
\begin{tabular}{c c c c c c c c c c c c c c} 
\toprule
&\multicolumn{11}{c}{Number of leaves (n)}   \\        
 \cmidrule(r){2-12}
Sample size ($s$) & 11 & 10 & 9 & 8 & 7 & 6 & 5 & 4 & 3 & 2 & 1 & Total\\ 
 \midrule
1  & &    &     &     &     &     &     &    &    &    & 1 &    1 \\ [.12ex]
2  & &    &     &     &     &     &     &    &    &  1 & 1 &    2 \\ [.12ex]
3  & &    &     &     &     &     &     &    &  1 &  1 & 1 &    3 \\ [.12ex]
4  & &    &     &     &     &     &     &  2 &  2 &  2 & 1 &    7 \\ [.12ex]
5  & &    &     &     &     &     &   3 &  4 &  4 &  2 & 1 &   14 \\ [.12ex] 
6  & &    &     &     &     &   6 &   9 & 10 &  6 &  3 & 1 &   35 \\ [.12ex] 
7  & &    &     &     &  11 &  20 &  24 & 17 &  9 &  3 & 1 &   85 \\ [.12ex] 
8  & &    &     &  23 &  46 &  61 &  49 & 30 & 12 &  4 & 1 &  226 \\ [.12ex] 
9  & &    &  46 & 106 & 152 & 138 &  93 & 44 & 16 &  4 & 1 &  600 \\ [.12ex] 
10 & & 98 & 248 & 386 & 387 & 290 & 157 & 66 & 20 &  5 & 1 & 1658 \\ [.12ex]
11 & 207 & 582 & 974 & 1072 & 878 & 535 & 253 & 90 & 25 & 5 & 1 & 4622 \\ [.12ex]
12 & 1376 & 2473 & 2951 & 2633 & 1774 & 939 & 383 & 124 & 30 & 6 & 1 & 13141 \\ [.12ex]
13 & 6262 & 8061 & 7763 & 5727 & 3340 & 1534 & 562 & 160 & 36 & 6 & 1 & 37699 \\ [.12ex]
14 & 21899 & 22657 & 18119 & 11551 & 5881 & 2420 & 792 & 208 & 42 & 7 & 1 & 109419 \\ [.12ex]
\bottomrule
\end{tabular}
\caption{\label{table:binary_counts}The number $b_{s,n}$ of rooted binary perfect phylogenies with sample size $s$ and $n$ leaves. Entries are obtained using eq.~\eqref{eq: binary_rec}; the ``total'' is $b_s = \sum_{n=1}^s b_{s,n}$. The total follows A113822; $b_{s,3}$ follows A002620. The main diagonal and its subdiagonal follow A001190 and A085748. For completeness, $b_{12,12}=451$, $b_{13,12}=3264$, $b_{13,13}=983$, $b_{14,12}=15886$, $b_{14,13}=7777$, and $b_{14,14}=2179$.}
\end{table}
\begin{proof}
From Proposition~\ref{prop:b_s,n recursion}, we have 
\begin{equation*}
b_{s,2}=
\begin{cases}
\sum\limits_{i=1}^{s-1}\frac{1}{2}b_{s-i,1}b_{i,1}, & \text{odd } s \geq 3,\\
\bigg(\sum\limits_{i=1}^{s-1}\frac{1}{2}b_{s-i,1}b_{i,1}\bigg)+\frac{1}{2}b_{s/2,1}, & \text{even } s \geq 2.
\end{cases}
\end{equation*}
Because $b_{s,1}=1$ for all $s \geq 1$, we obtain $b_{s,2}=\frac{s-1}{2}$ for odd $s \geq 3$, and $b_{s,2}=\frac{s}{2}$ for even  $s \geq 2$. Summarizing the odd and even cases in one expression, $b_{s,2}=\floor{ \frac{s}{2}}$.
\end{proof}

\begin{corollary}
For $s \geq 3$, the number $b_{s,3}$ of rooted binary perfect phylogenies with $n=3$ leaves is
\label{cor:s3}
\begin{equation}
b_{s,3}= \bigfloor{ \frac{s-1}{2} } \bigg \lceil \frac{s-1}{2} \bigg \rceil.
\label{eq:s3}
\end{equation}
\end{corollary}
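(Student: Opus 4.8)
The plan is to specialize the recurrence of Proposition~\ref{prop:b_s,n recursion} to $n=3$ and collapse it to a single sum of floor functions, then evaluate that sum in closed form. Since $n=3$ is odd, the first case of eq.~\eqref{eq: binary_rec} applies for every $s\geq 3$ with no correction term, giving
\[
b_{s,3}=\sum_{j=1}^{2}\sum_{i=j}^{s-3+j}\tfrac12\, b_{s-i,3-j}\,b_{i,j}
=\tfrac12\sum_{i=1}^{s-2} b_{s-i,2}\,b_{i,1}+\tfrac12\sum_{i=2}^{s-1} b_{s-i,1}\,b_{i,2}.
\]
First I would substitute $b_{m,1}=1$ (part (i) of Proposition~\ref{prop:b_s,n recursion}) and $b_{m,2}=\floor{m/2}$ (Corollary~\ref{cor:s2}). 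Reindexing the first sum via $m=s-i$ turns it into $\tfrac12\sum_{m=2}^{s-1}\floor{m/2}$, which is exactly the second sum, so that
\[
b_{s,3}=\sum_{m=2}^{s-1}\bigfloor{\tfrac{m}{2}}.
\]

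The remaining step is the elementary identity $\sum_{m=0}^{N}\floor{m/2}=\floor{N/2}\lceil N/2\rceil$, applied with $N=s-1$ (the terms $m=0,1$ contribute nothing, so the lower limit $2$ is harmless). I would establish this either by a one-line induction on $N$ or by pairing consecutive terms: $\floor{(2t)/2}+\floor{(2t+1)/2}=2t$ yields $\sum_{m=0}^{2k-1}\floor{m/2}=k(k-1)$ and $\sum_{m=0}^{2k}\floor{m/2}=k^2$, and both of these agree with $\floor{N/2}\lceil N/2\rceil$ (equivalently $\floor{N^2/4}$) in the respective parities. Substituting $N=s-1$ gives eq.~\eqref{eq:s3}.

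There is essentially no obstacle beyond careful bookkeeping. The two points to watch are the index ranges in the double sum --- in particular that the inner sum runs over $j\leq i\leq s-3+j$, the range on which $b_{i,j}$ and $b_{s-i,3-j}$ are simultaneously nonzero --- and the parity split in the final identity, which can be avoided entirely by invoking $\floor{N/2}\lceil N/2\rceil=\floor{N^2/4}$. A check against Table~\ref{table:binary_counts} (e.g.\ $b_{7,3}=3\cdot 3=9$ and $b_{8,3}=3\cdot 4=12$) confirms the formula.
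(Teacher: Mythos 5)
Your proposal is correct and follows essentially the same route as the paper: specialize the recurrence to $n=3$, use the $i \leftrightarrow s-i$ symmetry to merge the two inner sums into $\sum_{m=2}^{s-1}\lfloor m/2\rfloor$, and evaluate that sum (the paper does the parity split explicitly where you invoke the identity $\sum_{m=0}^{N}\lfloor m/2\rfloor=\lfloor N/2\rfloor\lceil N/2\rceil$, which is the same computation). No gaps.
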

\begin{proof} Using Proposition~\ref{prop:b_s,n recursion},
\begin{eqnarray*}
b_{s,3} & = & \sum\limits_{j=1}^2\sum\limits_{i=j}^{s-3+j}\frac{1}{2}b_{s-i,3-j}b_{i,j} = \sum\limits_{i=2}^{s-1} b_{s-i,1}b_{i,2},
\end{eqnarray*}
noting $\sum_{i=1}^{s-2}\frac{1}{2}b_{s-i,2}b_{i,1}=\sum_{i=2}^{s-1}\frac{1}{2}b_{s-i,1}b_{i,2}$ by an exchange of $s-i$ and $i$.

Using Corollary~\ref{cor:s2}, $b_{s,3}=\sum_{i=2}^{s-1} \floor{ \frac{i}{2} }$. The summation yields 
\begin{eqnarray*}
b_{s,3} = 
\begin{cases}
2\big(1+2+\dots +\frac{s-3}{2}\big)+\frac{s-1}{2} = \big( \frac{s-1}{2} \big) \big(\frac{s-1}{2} \big), & \text{odd } s \geq 3, \\
2\big(1+2+\dots +\frac{s-2}{2} \big) = \big( \frac{s-2}{2} \big) \big( \frac{s}{2} \big), & \text{even } s \geq 4.
\end{cases}
\end{eqnarray*}
We can summarize both cases in the single expression $b_{s,3} = \floor{ \frac{s-1}{2}} \lceil \frac{s-1}{2} \rceil$.
\end{proof}
For small $s$, the rooted binary perfect phylogenies with $n=3$ leaves appear in Figure~\ref{fig:5caterpillar}.

\begin{corollary} For $s\geq 4$, the number $b_{s,4}$ of rooted binary perfect phylogenies with $n=4$ leaves is 
\begin{equation}
b_{s,4}= 
\begin{cases}
\frac{(s-1)(s-3)(5s-1)}{48}, & \text{odd } s \geq 5, \\
\frac{s(s-2)(5s-11)}{48} + \frac{1}{2} \floor{ \frac{s}{4}},
& \text{even } s \geq 4.
\end{cases}
\label{eq:s4}
\end{equation}
\label{cor: s4}
\end{corollary}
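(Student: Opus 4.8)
The plan is to specialize the recursion of Proposition~\ref{prop:b_s,n recursion} to $n=4$, reduce it to quantities already in closed form, and then evaluate the resulting sums. Setting $n=4$ in eq.~\eqref{eq: binary_rec}, the inner double sum runs over $j\in\{1,2,3\}$. Since $b_{i,1}=1$ and, after the substitution $i\mapsto s-i$, the $j=1$ and $j=3$ contributions coincide, the double sum collapses; and for even $s$ the extra term is $\tfrac12 b_{s/2,2}=\tfrac12\lfloor s/4\rfloor$ by Corollary~\ref{cor:s2}. Following the pattern of Corollaries~\ref{cor:s2} and~\ref{cor:s3}, this gives
\begin{equation*}
b_{s,4}=\sum_{i=3}^{s-1} b_{i,3}+\tfrac12\sum_{i=2}^{s-2} b_{i,2}\,b_{s-i,2}+\tfrac12\lfloor s/4\rfloor\cdot[s\text{ even}],
\end{equation*}
so everything comes down to a partial sum of the $b_{i,3}$ (Corollary~\ref{cor:s3}) and a self-convolution of the $b_{i,2}$ (Corollary~\ref{cor:s2}).

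For the first sum, I would use $\lfloor m/2\rfloor\lceil m/2\rceil=\lfloor m^2/4\rfloor=\tfrac14\bigl(m^2-(m\bmod 2)\bigr)$, so that $\sum_{i=3}^{s-1}b_{i,3}=\sum_{m=2}^{s-2}\lfloor m^2/4\rfloor$ reduces to the elementary formulas for $\sum m^2$ and for the count of odd integers in a range. For the convolution $\sum_{i=2}^{s-2}\lfloor i/2\rfloor\lfloor(s-i)/2\rfloor$, two routes are available: a direct evaluation that splits the index $i$ by parity and again invokes $\sum m$ and $\sum m^2$; or, cleaner in my view, a generating-function computation. Multiplying eq.~\eqref{eq: binary_rec} by $z^s$ and summing (exactly as eq.~\eqref{eq:final_generating} was obtained, using $b_{i,j}=0$ for $i<j$ to extend the index ranges), one gets for the column generating functions $B_n(z)=\sum_s b_{s,n}z^s$ the relation $B_4(z)=B_1(z)B_3(z)+\tfrac12 B_2(z)^2+\tfrac12 B_2(z^2)$ (the first three summands from $j=1,2,3$, the last from the even/even correction). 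From Corollaries~\ref{cor:s2} and~\ref{cor:s3}, $B_1(z)=\frac{z}{1-z}$, $B_2(z)=\frac{z^2}{(1-z)^2(1+z)}$, and $B_3(z)=\frac{z^3}{(1-z)^3(1+z)}$, whence
\begin{equation*}
B_4(z)=\frac{z^4}{(1-z)^4(1+z)}+\frac{z^4}{2(1-z)^4(1+z)^2}+\frac{z^4}{2(1-z)^2(1+z)^2(1+z^2)}.
\end{equation*}

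The final step is a partial-fraction decomposition of $B_4(z)$ over the factors $(1-z)^k$ ($k\le 4$), $(1+z)^k$ ($k\le 2$), and $(1+z^2)$, followed by coefficient extraction via $[z^s](1-z)^{-k}=\binom{s+k-1}{k-1}$ and $[z^s](1+z)^{-k}=(-1)^s\binom{s+k-1}{k-1}$. The $(1-z)^k$ part contributes a cubic polynomial in $s$, the $(1+z)^k$ part a $(-1)^s$ times a linear polynomial, and the $(1+z^2)$ part a period-$4$ correction; separating the two parities of $s$ and simplifying should yield $\frac{(s-1)(s-3)(5s-1)}{48}$ for odd $s$ and $\frac{s(s-2)(5s-11)}{48}+\tfrac12\lfloor s/4\rfloor$ for even $s$, the floor term being precisely the repackaged period-$4$ and $(-1)^s$ leftovers surviving in the even case. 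I expect the main obstacle to be bookkeeping rather than ideas: keeping the parity cases straight in the two summations, or — in the GF route — carrying out the decomposition with the repeated linear factors and re-collecting the polynomial, alternating, and period-$4$ pieces into the stated closed form. Checking the formula against $b_{5,4},\dots,b_{8,4}$ in Table~\ref{table:binary_counts} pins down any arithmetic slips.
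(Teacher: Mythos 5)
Your proposal is correct, and its core reduction is exactly the paper's: specialize eq.~\eqref{eq: binary_rec} to $n=4$, merge the $j=1$ and $j=3$ contributions by the substitution $i\mapsto s-i$, substitute the closed forms of Corollaries~\ref{cor:s2} and~\ref{cor:s3}, and note the extra $\tfrac12 b_{s/2,2}=\tfrac12\lfloor s/4\rfloor$ term for even $s$; your reduced formula matches the paper's after re-indexing. For the evaluation step, your first route (split the two sums by parity and use $\sum k$, $\sum k^2$, with $\lfloor m/2\rfloor\lceil m/2\rceil=\lfloor m^2/4\rfloor$) is precisely what the paper does in eqs.~\eqref{eq:odd3}--\eqref{eq:even2}. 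Your preferred generating-function route is a genuinely different way to finish: the relation $B_4(z)=B_1(z)B_3(z)+\tfrac12 B_2(z)^2+\tfrac12 B_2(z^2)$ is the $n=4$ case of Proposition~\ref{lemma: bn gf}, and your expression for $B_4(z)$ coincides with the paper's eq.~\eqref{eq:B4}, which the paper derives in Section~\ref{sec:gf_leaves} but never pushes through coefficient extraction. Carrying out the partial-fraction decomposition and extracting $[z^s]$ does recover the stated formulas (the polynomial, alternating, and period-4 pieces reassemble into the two parity cases, with the $\tfrac12\lfloor s/4\rfloor$ absorbing the periodic leftovers in the even case), so this route works; what it buys is the elimination of floor/ceiling bookkeeping inside the summations, at the cost of a somewhat heavier partial-fraction computation with repeated linear factors. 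The only cosmetic slip is attributing $B_2(z)$ and $B_3(z)$ to Corollaries~\ref{cor:s2} and~\ref{cor:s3}; in the paper they come from Proposition~\ref{lemma: bn gf} (eqs.~\eqref{eq:s2 proof} and~\eqref{eq:B3}), though they can of course be re-derived by summing the corollaries' closed forms.
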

\begin{proof}
Using eq.~\eqref{eq: binary_rec}, we see that: 
\begin{equation*}
b_{s,4}=
\begin{cases}
\label{eq:s4_rec}
\sum\limits_{j=1}^{3} \sum\limits_{i=j}^{s-4+j} \frac{1}{2} b_{s-i,4-j} \, b_{i,j}, & \textnormal{odd } s \geq 5,\\
\bigg( \sum\limits_{j=1}^{3} \sum\limits_{i=j}^{s-4+j} \frac{1}{2} b_{s-i,4-j} \, b_{i,j}\bigg)+\frac{1}{2}b_{s/2,2}, & \textnormal{even } s \geq 4.
\end{cases} 
\end{equation*}
The outer sum considers values of 1, 2, and 3 for the sample size $j$ assigned to the right subtree. Assigning $j=3$ gives the same inner sum as $j=1$, as $\sum_{i=3}^{s-1} \frac{1}{2} b_{s-i,1} b_{s,3} = \sum_{i=1}^{s-3} \frac{1}{2} b_{s-i,3} b_{s,1}$ by an exchange of $i$ and $s-i$. Noting that $b_{s,1}=1$ for $s \geq 1$, the problem becomes:
\begin{equation*}
b_{s,4} = \begin{cases}
\label{eq:s4_rec2}
\sum\limits_{i=1}^{s-3} b_{s-i,3} + \sum\limits_{i=2}^{s-2} \frac{1}{2}b_{s-i,2} b_{i,2}, & \textnormal{odd } s \geq 5,\\
\bigg( \sum\limits_{i=1}^{s-3} b_{s-i,3} + \sum\limits_{i=2}^{s-2} \frac{1}{2}b_{s-i,2} b_{i,2} \bigg) +\frac{1}{2}b_{s/2,2}, & \textnormal{even } s \geq 4.
\end{cases} 
\end{equation*}

\begin{figure}[tb]
\centering
\includegraphics[width=\textwidth]{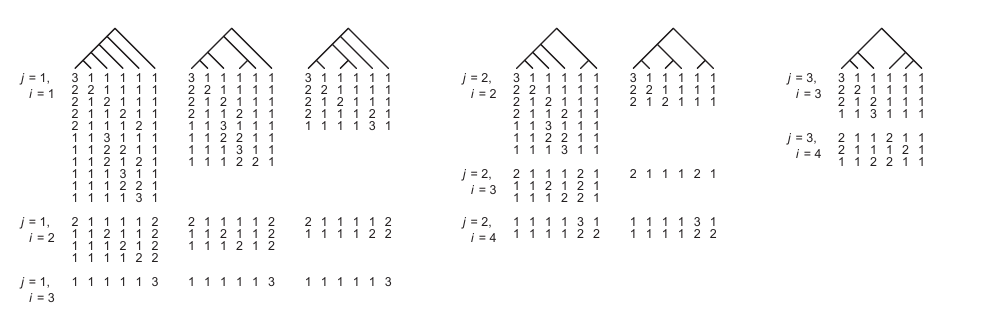}
\vspace{-.5cm}
\caption{The enumeration of all $b_{8,6}=61$ rooted binary perfect phylogenies with sample size $s=8$ and $n=6$ leaves. The number of leaves in the right subtree is indicated by $j$, and $i$ indicates the sample size for the right subtree.}
\label{fig:4}
\end{figure}

For $s$ odd, $s-3$ is even and $s-2$ is odd. We use Corollaries~\ref{cor:s2} and \ref{cor:s3} to obtain the summands, resolving floor and ceiling functions separately for odd and even quantities. The sums are completed using $\sum_{k=1}^n k = n(n+1)/2$ and $\sum_{k=1}^n k^2 = n(n+1)(2n+1)/6$. Then
\begin{eqnarray}
\sum\limits_{i=1}^{s-3} b_{s-i,3} 
 &=&\sum\limits_{i=1}^{s-3} \bigfloor{ \frac{s-i-1}{2} } \bigg\lceil \frac{s-i-1}{2} \bigg\rceil \nonumber\\
& = & \sum\limits_{k=1}^{\frac{s-3}{2}}
\bigfloor{ \frac{s-(2k-1)-1}{2} } \bigg\lceil \frac{s-(2k-1)-1}{2} \bigg\rceil + \bigfloor{ \frac{s-2k-1}{2} } \bigg\lceil \frac{s-2k-1}{2} \bigg\rceil
\nonumber\\
& = & \sum\limits_{k=1}^{\frac{s-3}{2}}
\bigg( \frac{s-2k-1}{2} \bigg) \bigg( \frac{s-2k+1}{2} \bigg) + \bigg( \frac{s-2k -1}{2} \bigg)^2 \nonumber \\
\label{eq:odd3}
& = & {(s-1)(s-3)(2s-1)}/{24}. \\
\sum\limits_{i=2}^{s-2} b_{s-i,2} \, b_{i,2}
 &=&\sum\limits_{i=2}^{s-2} \bigfloor{ \frac{s-i}{2} } \bigfloor{ \frac{i}{2} } 
 \nonumber\\
& = & \sum_{k=1}^{\frac{s-3}{2}} \bigfloor{ \frac{s-(2k+1)}{2} } \bigfloor{ \frac{2k+1}{2} } 
+ \bigfloor{ \frac{s-(2k)}{2} } \bigfloor{ \frac{2k}{2} } \nonumber \\
& = & 
\sum_{k=1}^{\frac{s-3}{2}} \bigg(\frac{s-2k-1}{2}\bigg) k 
+ \bigg(\frac{s-2k-1}{2}\bigg) k \nonumber \\
\label{eq:odd2}
& = & {(s+1)(s-1)(s-3)}/{24}. 
\end{eqnarray}
Summing eq.~\eqref{eq:odd3} and half of eq.~\eqref{eq:odd2}, we obtain the result in eq.~\eqref{eq:s4} for odd $s$.

The even case is similar, except that now $s-3$ is odd and $s-2$ is even.
\begin{eqnarray}
\sum\limits_{i=1}^{s-3} b_{s-i,3} 
& = &\sum\limits_{i=1}^{s-3} \bigfloor{ \frac{s-i-1}{2} } \bigg\lceil \frac{s-i-1}{2} \bigg\rceil \nonumber\\
& = & \bigfloor{ \frac{s-(s-3)-1}{2} } \bigg\lceil \frac{s-(s-3)-1}{2} \bigg\rceil \nonumber \\
& + &  \sum\limits_{k=1}^{\frac{s-4}{2}}
\bigfloor{ \frac{s-(2k-1)-1}{2} } \bigg\lceil \frac{s-(2k-1)-1}{2} \bigg\rceil + \bigfloor{ \frac{s-2k-1}{2} } \bigg\lceil \frac{s-2k-1}{2} \bigg\rceil
\nonumber\\
& = & 1 + \sum\limits_{k=1}^{\frac{s-4}{2}}
\bigg( \frac{s-2k}{2} \bigg)^2 + \bigg( \frac{s-2k -2}{2} \bigg) \bigg( \frac{s-2k}{2} \bigg) \nonumber \\
\label{eq:even3} 
& = & {s(s-2)(2s-5)}/{24}.
\end{eqnarray}
\begin{eqnarray}
\sum\limits_{i=2}^{s-2} b_{s-i,2} \, b_{i,2} & = &\sum\limits_{i=2}^{s-2} \bigfloor{ \frac{s-i}{2} } \bigfloor{ \frac{i}{2} } \nonumber\\
& = & \bigfloor{ \frac{s-(s-2)}{2}  } \bigfloor{ \frac{s-2}{2} } + \sum_{k=1}^{\frac{s-4}{2}} \bigfloor{ \frac{s-(2k+1)}{2} } \bigfloor{ \frac{2k+1}{2} } 
+ \bigfloor{ \frac{s-(2k)}{2} } \bigfloor{ \frac{2k}{2} } \nonumber \\
& = & \frac{s-2}{2} + \sum_{k=1}^{\frac{s-4}{2}} \bigg(\frac{s-2k-2}{2}\bigg) k + \bigg(\frac{s-2k}{2}\bigg) k \nonumber \\
\label{eq:even2}
& = & {s(s-1)(s-2)}/{24}. 
\end{eqnarray}
We obtain eq.~\eqref{eq:s4} for even $s$ by summing eq.~\eqref{eq:even3}, half of eq.~\eqref{eq:even2}, and $\frac{1}{2}b_{s/2,s} = \frac{1}{2} \floor{ \frac{s}{4} }$.
\end{proof}

\begin{figure}[tbp]
\centering
\includegraphics[width=\linewidth]{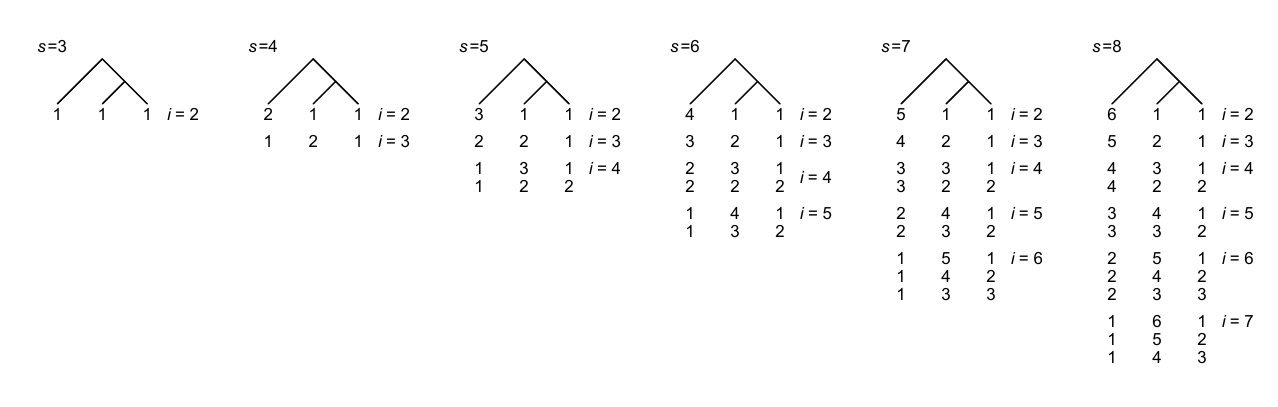}
\vspace{-1cm}
\caption{The $b_{s,3}$ rooted binary perfect phylogenies with $n=3$ leaves, for each $s$ from 3 to 8, as obtained by Proposition~\ref{prop:b_s,n recursion}. The value of $i$ indicates the sample size for the right subtree.} 
\label{fig:5caterpillar}
\end{figure}

\subsection{Generating function}
\label{sec:gf_leaves}

Beyond the closed-form expressions for the cases of $n=2$, 3, and 4, for a specific value of $n$ more generally, we can obtain a generating function for the sequence $b_{s,n}$ describing the number of rooted binary perfect phylogenies with fixed $n$ and increasing values of the sample size $s$.

Denote by $B_n(z)$ the generating function describing the sequence of values for the number of rooted binary perfect phylogenies with $n$ leaves and increasing sample size $s$: 
$$B_n(z)=\sum_{s=1}^\infty b_{s,n}z^s.$$
Recall that $b_{s,n} > 0$ only for integers $(s,n)$ with $s \geq n \geq 1$. We define $B_n(z)=0$ for non-integer values of $n$. 
\begin{proposition}
\label{lemma: bn gf}
The generating function $B_n(z)$ for the number $b_{s,n}$ of rooted binary perfect phylogenies with sample size $s$ and $n$ leaves satisfies \\
(i) $B_1(z)=\frac{z}{1-z}$. \\
(ii) For $n\geq 2$,
\begin{equation}
B_n(z)=\bigg( \sum_{j=1}^{n-1} \frac{1}{2} B_{n-j}(z) \, B_j(z)\bigg)+\frac{1}{2}B_{n/2}(z^2).
\label{eq:n leaves gf}
\end{equation}
\end{proposition}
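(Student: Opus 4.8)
The plan is to mirror the derivation of Proposition~\ref{prop:Bs gen func}: take the recursion of Proposition~\ref{prop:b_s,n recursion}, multiply through by $z^s$, sum over $s$, and identify the resulting sums as Cauchy products and dilations of the generating functions $B_n$. Part~(i) is immediate, since $b_{s,1}=1$ for every $s\geq1$ gives $B_1(z)=\sum_{s=1}^{\infty}z^s=z/(1-z)$.

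For part~(ii), fix $n\geq2$. Because $b_{s,n}=0$ unless $s\geq n$, we have $B_n(z)=\sum_{s\geq n}b_{s,n}z^s$, and I would multiply eq.~\eqref{eq: binary_rec} by $z^s$ and sum over $s\geq n$. The double sum contributes
$$\sum_{s\geq n}\sum_{j=1}^{n-1}\sum_{i=j}^{s-n+j}\tfrac12\,b_{s-i,n-j}\,b_{i,j}\,z^s
=\sum_{j=1}^{n-1}\tfrac12\sum_{s}\sum_{i}\big(b_{s-i,n-j}z^{s-i}\big)\big(b_{i,j}z^{i}\big),$$
where the inner index $i$ may be extended to run over all integers: the terms with $i<j$ or $i>s-n+j$ already vanish, as $b_{i,j}=0$ for $i<j$ and $b_{s-i,n-j}=0$ for $s-i<n-j$. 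Hence the inner double sum over $(s,i)$ is the Cauchy product $B_{n-j}(z)B_j(z)$, so the double sum equals $\sum_{j=1}^{n-1}\tfrac12\,B_{n-j}(z)B_j(z)$.

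It remains to handle the correction term $\tfrac12 b_{s/2,n/2}$, which by eq.~\eqref{eq: binary_rec} appears precisely when $s$ and $n$ are both even. If $n$ is even, summing it over $s=2s'$ yields $\tfrac12\sum_{s'\geq1}b_{s',n/2}z^{2s'}=\tfrac12 B_{n/2}(z^2)$; if $n$ is odd there is no such term, but then $n/2\notin\mathbb{N}$, so by our convention $B_{n/2}\equiv0$ and $\tfrac12 B_{n/2}(z^2)=0$ anyway. Thus in both parities the correction contributes exactly $\tfrac12 B_{n/2}(z^2)$, and adding this to the double-sum term gives eq.~\eqref{eq:n leaves gf}.

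The only step needing real care is the convolution---confirming that enlarging the range of $i$ to all integers is harmless because of the support of $b_{s,n}$---together with the uniform treatment of the even and odd cases via the convention $B_{n/2}\equiv0$ for odd $n$; the rest is the same manipulation already performed for Proposition~\ref{prop:Bs gen func}.
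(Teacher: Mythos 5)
Your proposal is correct and follows essentially the same route as the paper's proof: multiply the recursion of Proposition~\ref{prop:b_s,n recursion} by $z^s$, sum over $s$, recognize the Cauchy products $B_{n-j}(z)B_j(z)$ after using the support of $b_{s,n}$ to extend the index ranges, and identify the correction term as $\tfrac12 B_{n/2}(z^2)$ with the convention $B_{n/2}\equiv 0$ for odd $n$. No gaps.
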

\begin{proof}
(i) For $n=1$, for all $s\geq 1$, $b_{s,1}=1$. Hence, $B_1(z)=\frac{z}{1-z}$. \\
(ii) For $n\geq 2$, we use the recursive eq.~\eqref{eq: binary_rec}, 
\begin{align*}
B_n(z) &=\sum_{s=n}^\infty \bigg(\sum_{j=1}^{n-1}\sum_{i=j}^{s-n+j}\frac{1}{2}b_{s-i,n-j}b_{i,j} \bigg)z^s + \sum_{s=n}^\infty \frac{1}{2}b_{s/2,n/2}z^s \\
&=\sum_{s=n}^\infty\bigg(\sum_{j=1}^{n-1}\sum_{i=j}^{s-n+j}\frac{1}{2}b_{s-i,n-j}z^{s-i}b_{i,j} z^i \bigg)+ \sum_{s=n}^\infty \frac{1}{2}b_{s/2,n/2}z^s.
\end{align*}
We adjust the summation limits, noting that $b_{s,n}=0$ for $s < n$:
\begin{align*}
B_n(z) &= \sum_{s=2}^\infty \bigg(\sum_{j=1}^{n-1}\sum_{i=1}^{s-1}\frac{1}{2}b_{s-i,n-j}z^{s-i}b_{i,j} z^i \bigg)+\sum_{s=2}^\infty \frac{1}{2}b_{s/2,n/2}z^s\\
&=\bigg(\sum_{j=1}^{n-1} \frac{1}{2} \big(b_{1,n-j}b_{1,j} z^2+(b_{2,n-j}b_{1,j}+b_{1,n-j}b_{2,j}) z^3+\ldots \big)\bigg)+ \sum_{s=2}^\infty \frac{1}{2}b_{s/2,n/2}z^s.
\end{align*}
We can factor the first summation into a product of generating functions to get
\begin{align*}
B_n(z)&=\bigg(\sum_{j=1}^{n-1}\frac{1}{2}(b_{1,n-j}z^1+b_{2,n-j}z^2+\ldots)(b_{1,j} z^1+b_{2,j}z^2 +\ldots)\bigg)+\frac{1}{2}(b_{1,n/2}z^2+b_{2,n/2}z^4+\ldots)\\
&=\bigg( \sum_{j=1}^{n-1}\frac{1}{2}  B_{n-j}(z) \, B_j(z)\bigg)+\frac{1}{2}B_{n/2}(z^2),
\end{align*}
completing the proof.
\end{proof}
Iterating from the base case $B_1(z)=\frac{z}{1-z}$, we can write an explicit form for $B_n(z)$ for a fixed $n$. Using Proposition~\ref{lemma: bn gf}, we have 
\begin{align}
B_2(z)&=\frac{1}{2}B_1(z)^2+\frac{1}{2}B_1(z^2) \nonumber\\ 
&= \frac{1}{2}\frac{z^2}{(1-z)^2}+\frac{1}{2}\frac{z^2}{1-z^2}\nonumber\\
&=\frac{z^2}{(1-z)^2(1+z)}. \label{eq:s2 proof}
\end{align}
Next, using $B_2(z)$ from eq.~\eqref{eq:s2 proof}, we have
\begin{align} 
B_3(z)&=\frac{1}{2}B_1(z)\,B_2(z)+\frac{1}{2}B_2(z)\, B_1(z)\nonumber\\
&=\frac{z^3}{(1-z)^3(1+z)}.\label{eq:B3}
\end{align}
For $B_4(z)$, we use eq.~\eqref{eq:B3}
\begin{align}
B_4(z)&=\frac{1}{2} B_1(z) \, B_3(z)+ \frac{1}{2}B_2^2(z) + \frac{1}{2} B_3(z) \, B_1(z)+\frac{1}{2}B_2(z^2)\nonumber\\
&=\frac{z^4}{(1-z)^4(1+z)}+\frac{1}{2}\frac{z^4}{(1-z)^4(1+z)^2}+\frac{1}{2}\frac{z^4}{(1-z^2)^2(1+z^2)} \nonumber \\
&= \frac{z^4 (2 + 2z^2 + z^3)}{(1-z)^4(1+z)^2(1+z^2)}.
\label{eq:B4}
\end{align}
We can continue iteratively to get generating functions for larger $n$. 

\section{Rooted binary perfect phylogenies with a specific unlabeled tree shape}
\label{sec:treeshape}

We have counted perfect phylogenies with sample size $s$, with sample size $s$ and number of leaves $n$, and with number of leaves $n$. Each perfect phylogeny has an associated unlabeled tree shape; we now count the perfect phylogenies with sample size $s$ and a specific shape (with $n$ leaves). We begin with the case that the tree is a caterpillar. We then proceed to the general case of an arbitrary unlabeled tree shape.

\subsection{Caterpillar tree shape}
\label{sec:cat_shape}
A caterpillar tree with $n \geq 2$ leaves has exactly 1 cherry node. In other words, for $n \geq 3$, a caterpillar tree is constructed by adjoining a caterpillar tree with $n-1$ leaves and a single-leaf tree to a shared root.

Denote by $g_{s,n}$ the number of caterpillar rooted binary phylogenies with sample size $s$ and $n \geq 2$ leaves. We set $g_{s,n}=0$ if $s < n$ (or $s \notin \mathbb{N}$, or $n \notin \mathbb{N}$). 
\begin{proposition}
\label{prop:cat rec}
The number $g_{s,n}$ of rooted binary perfect phylogenies with caterpillar shape, sample size $s$, and $n$ leaves, $2 \leq n \leq s$, satisfies \\
(i) $g_{s,2}=\floor{\frac{s}{2}} $ for all $s \geq 2$. \\
(ii) For $(s,n)$ with $s \geq n\geq 3$,
\begin{equation}
g_{s,n}=\sum_{i=n-1}^{s-1} g_{i,n-1}=\sum_{i_1=n-1}^{s-1}\sum_{i_2=n-2}^{i_1-1} \ldots \sum_{i_{n-3}=3}^{i_{n-4}-1}\sum_{i_{n-2}=2}^{i_{n-3}-1} \bigfloor{ \frac{i_{n-2}}{2}}. 
  \label{eq:caterpillarcounts}
\end{equation}
\end{proposition}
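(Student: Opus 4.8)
The plan is to prove the two parts separately, with part (ii) resting on a decomposition of a caterpillar at its root followed by iteration of the resulting one-step recurrence. For part (i): a perfect phylogeny with $n=2$ leaves is a single cherry whose two leaves carry positive multiplicities summing to $s$, and since the tree is non-plane, the number of such assignments is the number of unordered pairs $(a,b)$ of positive integers with $a+b=s$, namely $\lfloor s/2\rfloor$. Equivalently, every $2$-leaf tree is a caterpillar, so $g_{s,2}=b_{s,2}=\lfloor s/2\rfloor$ by Corollary~\ref{cor:s2}.

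For the first equality in part (ii) I would use the structural observation stated just before the proposition: for $n\geq 3$, a caterpillar with $n$ leaves is built by joining a caterpillar with $n-1$ leaves and a single leaf at a common root. The key point is that this decomposition is a bijection. Given an $n$-leaf caterpillar with $n\geq 3$, all internal nodes lie on a path, so the root has exactly one leaf child (the pendant) and one internal child; deleting the root and its pendant leaf leaves a caterpillar on $n-1$ leaves. Conversely, joining an $(n-1)$-leaf caterpillar (which for $n-1\geq 2$ has exactly one cherry) to a single leaf at a new root yields a tree whose new root is not a cherry, since it has only one leaf child; hence the result is again a caterpillar with a single cherry. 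Because the two subtrees at the root have different numbers of leaves ($n-1\geq 2$ versus $1$), they can never coincide, so no Wedderburn-type $\tfrac12$ correction or $\binom{\cdot}{2}$ term arises, in contrast with eq.~\eqref{eq:enumeration_alln}: each $n$-leaf caterpillar is produced exactly once. Summing over the sample size $i$ of the $(n-1)$-leaf caterpillar child, which must satisfy $i\geq n-1$ (enough multiplicity for its $n-1$ leaves) and $i\leq s-1$ (leaving at least $1$ for the pendant leaf), with the pendant multiplicity $s-i$ then forced, gives $g_{s,n}=\sum_{i=n-1}^{s-1}g_{i,n-1}$.

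For the second equality I would iterate the recurrence $n-2$ times, writing $i=i_1$ and expanding $g_{i_1,n-1}=\sum_{i_2=n-2}^{i_1-1}g_{i_2,n-2}$, and so on; after $k$ applications the index $i_k$ ranges over $n-k\leq i_k\leq i_{k-1}-1$, and after $n-2$ applications one reaches $g_{i_{n-2},2}$ with $2\leq i_{n-2}\leq i_{n-3}-1$. Substituting the base case $g_{i_{n-2},2}=\lfloor i_{n-2}/2\rfloor$ from part (i) produces exactly the nested sum in eq.~\eqref{eq:caterpillarcounts}. This index telescoping can be made rigorous by a short induction on $n$, using the one-step recurrence to pass from the displayed formula for $n-1$ to the one for $n$, or simply by unwinding the recursion directly.

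The main obstacle I anticipate is the bijectivity claim underlying the first equality of part (ii): one must check carefully that the ``pendant leaf at the root plus $(n-1)$-leaf caterpillar'' decomposition is well defined and invertible for every $n\geq 3$, that the reconstructed tree is genuinely a caterpillar (no second cherry is created at the new root), and that the non-plane convention introduces neither over- nor under-counting here, unlike the balanced case $i=s-i$ that forces the extra term in eq.~\eqref{eq:enumeration_alln}. Once that is settled, the remainder of the argument is bookkeeping.
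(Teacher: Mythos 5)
Your proposal is correct and follows essentially the same route as the paper: decompose the $n$-leaf caterpillar at its root into an $(n-1)$-leaf caterpillar and a pendant leaf, sum over the sample size $i$ of the caterpillar subtree with $n-1 \leq i \leq s-1$, iterate down to the $2$-leaf base case, and invoke Corollary~\ref{cor:s2}. Your added justification that no Wedderburn-type $\tfrac{1}{2}$ correction arises (because the two root subtrees have different leaf counts and so can never coincide) is a point the paper leaves implicit, but it is the same argument.
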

\begin{proof} (i) Recognizing that the only tree shape with $n=2$ leaves is the 2-leaf caterpillar tree, we see that we already proved this result in Corollary~\ref{cor:s2}.

(ii) For $n\geq 3$, the left subtree of a caterpillar of size $n$ is a caterpillar of size $n-1$. We assign sample size $i$ to the left subtree, $n-1 \leq i \leq s-1$, and sample size $s-i$ to the single leaf in the right subtree, obtaining 
\begin{equation}
\label{eq:gsn}
    g_{s,n}=\sum_{i=n-1}^{s-1} g_{i,n-1} b_{s-i,1} = \sum_{i=n-1}^{s-1} g_{i,n-1}.
\end{equation}
Proceeding iteratively, we have
$$g_{s,n} = \sum_{i_1=n-1}^{s-1} g_{i_1,n-1} = \sum_{i_1=n-1}^{s-1} \sum_{i_2=n-2}^{i_1-1} g_{i_2,n-2} = \ldots = \sum_{i_1=n-1}^{s-1}\sum_{i_2=n-2}^{i_1-1} \ldots \sum_{i_{n-3}=3}^{i_{n-4}-1}\sum_{i_{n-2}=2}^{i_{n-3}-1} g_{i_{n-2},2}.$$
We apply the base case of $n=2$ to complete the proof.
\end{proof}

We can apply Proposition~\ref{prop:cat rec} with specific small values of $n$, completing the sum in eq.~\eqref{eq:caterpillarcounts}. The case of $n=3$ was obtained in  Corollary~\ref{cor:s3}, and we will write its solution in a different form. We proceed via calculations similar to those performed in obtaining Corollaries~\ref{cor:s2}-\ref{cor: s4}.

We use an approach that avoids summations that include floor and ceiling functions, as appeared in  the proofs of Corollaries~\ref{cor:s3} and \ref{cor: s4}. Separating the $n=2$ result $g_{s,2} = \lfloor \frac{s}{2} \rfloor$ (Corollary~\ref{cor:s2}) into cases for odd and even $s$, we can increase $n$ incrementally, observing from eq.~\eqref{eq:gsn} that for fixed $n$, $g_{s,n}$ as a function of $s$ can be written with odd and even cases, each consisting of a polynomial of degree $n-1$ in $s$. It is convenient to instead define the cases in terms of odd and even $s-n$. In particular, for $s-n$ even, we define $f_{s,n}^e$ to be the polynomial describing the number of caterpillar perfect phylogenies with sample size $s$ and $n$ leaves. Similarly, for $s-n$ odd, we define $f_{s,n}^o$ as the corresponding polynomial for the number of caterpillar perfect phylogenies with sample size $s$ and $n$ leaves. Both polynomials can be calculated for all $(s,n)$ with $s \geq n \geq 2$, but each represents the number of caterpillar perfect phylogenies only in its associated case. 

With these definitions, the number of caterpillar perfect phylogenies $g_{s,n}$ can be written in a form that is convenient for computation, containing only a single floor function.
\begin{proposition}
\label{prop:gsn}
For $(s,n)$ with $n\geq 2$ and $s\geq n$, the number of caterpillar perfect phylogenies with $n$ leaves and sample size $s$ is
\begin{equation}
    g_{s,n}=\floor{ f_{s,n}^e}= \bigfloor{ \bigg (\sum_ {i = n-1}^{s - 1} f_{i,n-1}^e \bigg) - \frac {s - n}{2^{n-1}} }.
    \label{eq:gsn recursion}
    \end{equation}
\end{proposition}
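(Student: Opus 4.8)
The plan is to prove both claimed identities at once, by induction on $n$, by establishing the auxiliary fact that for every $n \ge 2$,
\[
g_{s,n} =
\begin{cases}
f_{s,n}^{e}, & s-n \text{ even},\\
f_{s,n}^{e} - 2^{-(n-1)}, & s-n \text{ odd}
\end{cases}
\]
(equivalently $f_{s,n}^{e} - f_{s,n}^{o} = 2^{-(n-1)}$, a constant independent of $s$), together with the recursion $f_{s,n}^{e} = \big(\sum_{i=n-1}^{s-1} f_{i,n-1}^{e}\big) - (s-n)/2^{n-1}$ of eq.~\eqref{eq:gsn recursion}. Granting this, Proposition~\ref{prop:gsn} follows immediately, since $g_{s,n}$ is an integer: if $s-n$ is even then $\lfloor f_{s,n}^{e}\rfloor = f_{s,n}^{e} = g_{s,n}$, and if $s-n$ is odd then $f_{s,n}^{e} = g_{s,n} + 2^{-(n-1)}$ with $0 < 2^{-(n-1)} < 1$ because $n \ge 2$, so $\lfloor f_{s,n}^{e}\rfloor = g_{s,n}$ again; substituting the recursion for $f_{s,n}^{e}$ then yields the second displayed form.

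First I would dispatch the base case $n=2$: Corollary~\ref{cor:s2} gives $g_{s,2} = \lfloor s/2\rfloor$, so $f_{s,2}^{e} = s/2$, and then $g_{s,2} = s/2$ when $s-2$ is even and $g_{s,2} = (s-1)/2 = s/2 - 2^{-1}$ when $s-2$ is odd; with the natural convention $f_{i,1}^{e} \equiv 1$ (mirroring the single-leaf factor $b_{s-i,1}=1$ in eq.~\eqref{eq:gsn}) the recursion also holds at $n=2$, since $\sum_{i=1}^{s-1} 1 - (s-2)/2 = s/2$. For the inductive step, fix $n \ge 3$ and assume the statement for $n-1$. Eq.~\eqref{eq:gsn} gives $g_{s,n} = \sum_{i=n-1}^{s-1} g_{i,n-1}$; as $i$ ranges over $n-1, n, \dots, s-1$ the parity of $i-(n-1)$ alternates, starting even at $i=n-1$. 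Replacing every even-parity summand by $f_{i,n-1}^{e}$ and every odd-parity summand by $f_{i,n-1}^{e} - 2^{-(n-2)}$,
\[
g_{s,n} = \Big(\sum_{i=n-1}^{s-1} f_{i,n-1}^{e}\Big) - 2^{-(n-2)}\,\#\{\, i : n-1 \le i \le s-1,\ i-(n-1)\ \text{odd}\,\}.
\]
The count of odd-parity indices equals $\lceil (s-n)/2\rceil$, which is $(s-n)/2$ for $s-n$ even and $(s-n+1)/2$ for $s-n$ odd. By Faulhaber's formula $\sum_{i=n-1}^{s-1} f_{i,n-1}^{e}$ is a polynomial in $s$ of degree $n-1$ (summing a degree-$(n-2)$ polynomial), so $f_{s,n}^{e} := \big(\sum_{i=n-1}^{s-1} f_{i,n-1}^{e}\big) - (s-n)/2^{n-1}$ is as well; feeding the two parity cases into the display gives $g_{s,n} = f_{s,n}^{e}$ for $s-n$ even and $g_{s,n} = f_{s,n}^{e} - 2^{-(n-2)}\cdot\tfrac12 = f_{s,n}^{e} - 2^{-(n-1)}$ for $s-n$ odd, which closes the induction.

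The alternating-parity count and the Faulhaber step are routine. The one point that needs care, and which I expect to be the main obstacle, is the internal consistency of the induction: one must verify that the polynomial $f_{s,n}^{e}$ produced by the recursion is the interpolant of $g_{s,n}$ on the \emph{entire} even-parity class $\{\, s \ge n : s \equiv n \pmod 2 \,\}$, not merely for large $s$. This holds because eq.~\eqref{eq:gsn} and the inductive description of $g_{i,n-1}$ are valid for all admissible $i$ and $s$, so the displayed identity is valid for every $s \ge n$; being a polynomial identity once the parity of $s-n$ is fixed, it determines $f_{s,n}^{e}$ unambiguously and independently of parity. The halving of the defect from one level to the next — precisely what forces $f_{s,n}^{e}$ to have fractional part $2^{-(n-1)} \in (0,1)$ and hence makes the floor recover $g_{s,n}$ — comes from the fact that the odd-parity count $\lceil (s-n)/2\rceil$ shifts by $\tfrac12$ between the even and odd cases.
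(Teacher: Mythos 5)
Your argument is correct and is essentially the paper's own proof: the paper isolates your auxiliary fact as Lemma~\ref{lemma:evenodd} ($f_{s,n}^e-f_{s,n}^o=1/2^{n-1}$), proved by exactly the same induction---substituting the parity cases for $g_{i,n-1}$ into eq.~\eqref{eq:gsn}, counting the odd-parity indices, and obtaining eqs.~\eqref{eq:fe}--\eqref{eq:fo}---and then deduces the proposition from the floor argument you give. Your only deviations are cosmetic: you merge the lemma and the proposition into a single induction and make explicit the convention $f_{i,1}^e\equiv 1$ needed for the $n=2$ instance of the displayed recursion, which the paper leaves implicit.
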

The proposition relies on a lemma.
\begin{lemma} 
\label{lemma:evenodd}
For $(s,n)$ with $n\geq 2$ and $s\geq n$, 
\begin{equation}
f_{s,n}^e-f_{s,n}^o=\frac{1}{2^{n-1}}.
\end{equation}
\end{lemma}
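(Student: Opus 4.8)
The plan is to prove the identity $f_{s,n}^e - f_{s,n}^o = 1/2^{n-1}$ by induction on $n$, exploiting the recursive structure $g_{s,n} = \sum_{i=n-1}^{s-1} g_{i,n-1}$ from eq.~\eqref{eq:gsn}. The base case is $n=2$: here $g_{s,2} = \lfloor s/2 \rfloor$, so the even-$s-n$ polynomial is $f_{s,2}^e = s/2$ (applicable when $s$ is even) and the odd case is $f_{s,2}^o = (s-1)/2$ (applicable when $s$ is odd), giving $f_{s,2}^e - f_{s,2}^o = 1/2 = 1/2^{n-1}$, as required. (One should note the polynomials $f_{s,n}^e, f_{s,n}^o$ are \emph{defined} for all $(s,n)$ with $s \ge n \ge 2$ by carrying out the summation recursion symbolically in the two parity classes, even though each only counts phylogenies in its own parity regime.)

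For the inductive step, I would fix $n \ge 3$ and assume $f_{i,n-1}^e - f_{i,n-1}^o = 1/2^{n-2}$ for all valid $i$. The key observation is how the parity of $i-(n-1)$ behaves as $i$ ranges over $n-1, n, \ldots, s-1$: it alternates, so in the sum $\sum_{i=n-1}^{s-1} g_{i,n-1}$ the terms alternate between being given by $f_{i,n-1}^e$ and by $f_{i,n-1}^o$. I would express the true count $g_{s,n}$ in each parity class of $s-n$ by splitting the summation range into its even-index and odd-index parts, then replace each $g_{i,n-1}$ by $f_{i,n-1}^e$ uniformly and correct: each $i$ with $i-(n-1)$ odd contributes $f_{i,n-1}^o = f_{i,n-1}^e - 1/2^{n-2}$ instead of $f_{i,n-1}^e$, so the running sum $\sum_{i=n-1}^{s-1} f_{i,n-1}^e$ overcounts by $(1/2^{n-2})$ times the number of odd-index terms in the range. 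Counting those terms in the two parities of $s-n$ and taking the difference is a short finite computation; the discrepancy works out to $(1/2^{n-2}) \cdot \tfrac12 = 1/2^{n-1}$, which is exactly what the polynomials $f_{s,n}^e$ and $f_{s,n}^o$ (defined so that $g_{s,n} = f_{s,n}^e$ when $s-n$ is even and $g_{s,n}=f_{s,n}^o$ when $s-n$ is odd, up to the floor appearing in Proposition~\ref{prop:gsn}) differ by. This also dovetails with eq.~\eqref{eq:gsn recursion}: the $-(s-n)/2^{n-1}$ correction term there is precisely the cumulative defect from replacing $g_{i,n-1}$ by $f_{i,n-1}^e$ throughout, and the floor recovers the integer count.

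The main obstacle I anticipate is bookkeeping the parity of $i-(n-1)$ correctly against the parity of $s-n$, since whether the summation endpoint $i=s-1$ is an ``even'' or ``odd'' index depends on $s-n \bmod 2$, and one must be careful that the polynomials $f_{i,n-1}^e$ being summed are genuine polynomials in $i$ (so that $\sum_{i} f_{i,n-1}^e$ is again a polynomial in $s$ of degree $n-1$, matching the claimed form of $f_{s,n}^e$) rather than the floor-truncated true counts. Once the alternation is tracked cleanly — for instance by writing $i = n-1+t$ and summing over $t = 0, \ldots, s-n$, so that parity of $t$ is the relevant index — the identity reduces to the elementary fact that among $s-n+1$ consecutive integers $t$, the count of odd $t$ minus what a symmetric split would give is $\pm\tfrac12$ on average in a way that makes the two parity cases differ by exactly $1/2^{n-1}$ after multiplying by the inductive gap $1/2^{n-2}$.
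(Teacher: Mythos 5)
Your proposal is correct and takes essentially the same route as the paper's proof: induction on $n$ with the same base case $f_{s,2}^e - f_{s,2}^o = \tfrac{1}{2}$, and an inductive step that rewrites $g_{s,n} = \sum_{i=n-1}^{s-1} g_{i,n-1}$ as $\sum_i f_{i,n-1}^e$ minus $1/2^{n-2}$ times the number of indices with $i-(n-1)$ odd, whose dependence on the parity of $s-n$ produces exactly the gap $1/2^{n-1}$. The paper carries out the same bookkeeping via an indicator-function sum, arriving at the explicit expressions for $f_{s,n}^e$ and $f_{s,n}^o$ that you describe.
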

\begin{proof}
We proceed by induction on $n$. For the base case, $n=2$, by Corollary~\ref{cor:s2}, we have $g_{s,2}=\floor{\frac{s}{2}}$. Thus for $s$ even, $g_{s,2}=\frac{s}{2}$, and for $s$ odd, $g_{s,2}=\frac{s}{2}-\frac{1}{2}$. In other words, we have $f_{s,2}^e=\frac{s}{2}$ and $f_{s,2}^o=\frac{s-1}{2}$. It follows that $f_{s,2}^e-f_{s,2}^o=\frac{1}{2}$.

For the inductive step, suppose for $n\geq 3$ that $f_{s,n-1}^e-f_{s,n-1}^o=1/2^{n-2}$. If $s-(n-1)$ is even, then $f_{s,n-1}^e$ is an integer, with $g_{s,n-1}=f_{s,n-1}^e=\floor{f_{s,n-1}^e}$. If instead $s-(n-1) $ is odd, then $f_{s,n-1}^o = g_{s,n-1}$ is an integer, and by the inductive hypothesis, $\floor{f_{s,n-1}^e}=f_{s,n-1}^e-1/2^{n-2}=f_{s,n-1}^o$.

By eq.~\eqref{eq:gsn} and the inductive hypothesis, using $\mathbbm{1}_{\{x\}}=1$ if statement $x$ holds and $\mathbbm{1}_{\{x\}}=0$ otherwise, we obtain 
\begin{align*}
g_{s,n}=&\sum_{i=n-1}^{s-1} g_{i,n-1} =\sum_{i=n-1}^{s-1} \floor{f_{i,n-1}^e} \\
=& \bigg( \sum_{i=n-1}^{s-1} f_{i,n-1}^e \bigg) - \frac{1}{2^{n-2}} \sum_{i=n-1}^{s-1} \mathbbm{1}_{\{i-(n-1) \text{ is odd}\}}.
\end{align*}
We then use that 
$$
\sum_{i=n-1}^{s-1} \mathbbm{1}_{\{i-(n-1) \text{ is odd}\}} = \sum_{i=0}^{s-n} \mathbbm{1}_{\{i \text{ is odd}\}}=
    \begin{cases}
    \frac{s-n}{2}, & s-n \text { even,}\\
    \frac{s-(n-1)}{2}, & s-n \text { odd.}
    \end{cases}
$$
We then obtain expressions for $g_{s,n}$ in the case of even $s-n$ and odd $s-n$. Because $g_{s,n}=f_{s,n}^e$ for even $s-n$ and $g_{s,n}=f_{s,n}^o$ for odd $s-n$, we have 
\begin{align}
f_ {s,n}^e &= \bigg(\sum_ {i = n-1}^{s - 1} f_{i,n-1}^e \bigg) - \frac {s \
-n}{2^{n-1}},\label{eq:fe}\\
f_ {s,n}^o &=\bigg (\sum_ {i = n-1}^{s - 1} f_{i,n-1}^e \bigg) - \frac {s - \
(n-1)}{2^{n-1}}\label{eq:fo}.
\end{align}
Now we see that $f_{s,n}^e-f_{s,n}^o=1/2^{n-1}$, completing the induction. 
\end{proof}
\noindent \emph{Proof of Proposition~\ref{prop:gsn}.}
From Lemma~\ref{lemma:evenodd}, for each $n\geq 2$, $f_{s,n}^e$ exceeds $f_{s,n}^o$ by a quantity that is less than 1. Hence, for odd $s-n$, for which $g_{s,n}=f_{s,n}^o$ and $f_{s,n}^o$ is an integer, $\floor{f_{s,n}^e} = f_{s,n}^o = g_{s,n}$. For even $s-n$, $g_{s,n}=f_{s,n}^e$ and $f_{s,n}^e$ is an integer, so that $\floor{f_{s,n}^e} = f_{s,n}^e = g_{s,n}$. We conclude in both odd and even cases that $g_{s,n}= \floor{f_{s,n}^e}$, with $f_{s,n}^e$ specified by eq.~\eqref{eq:fe}.
$\square$

\vskip .2cm
We can then compute $g_{s,n}$ for the smallest $n$ by iteratively summing polynomials to calculate $f_{s,n}^e$ in eq.~\eqref{eq:fe}, taking the floor of the output. We present the first several functions $g_{s,n}$.
\begin{align}
g_{s,2}&=&\bigfloor{ \frac{s}{2}}\\
g_{s,3}&=&\bigfloor{ \frac{(s-1)^2}{4}}\\
g_{s,4}&=&\bigfloor{ \frac{s(s-2)(2s-5)}{24}} \label{eq: C4}\\
g_{s,5}&=&\bigfloor{ \frac{(s-1) (s-3) (s^2 - 4 s + 1)}{48}}\\
g_{s,6}&=&\bigfloor{ \frac{s(s-2)(s-4)(2s^2-13s+16)}{480}}\\
g_{s,7}&=&\bigfloor{ \frac{(s-1) (s-3 )^2 (s-5) (2s^2-12s+1 )}{2880}}\\
g_{s,8}&=&\bigfloor{ \frac{s(s-2)(s-4)(s-6)(4s^3-50s^2+176s-151)}{40320}}\\
g_{s,9}&=&\bigfloor{ \frac{(s-1)(s-3) (s-5) (s-7)  (s^4-16s^3+78s^2 - 112 s+3)}{80640}}.
\end{align}
The values of $g_{s,n}$ for $2 \leq n \leq s \leq 13$ appear in Table~\ref{table:caterpillars}. 

We next obtain a generating function for the number of perfect phylogenies with the fixed caterpillar shape with $n$ leaves, as $s$ increases.
\begin{proposition}
\label{prop:cat_gf}
    The generating function $G_n(z)$ for the number $g_{s,n}$ of rooted binary perfect phylogenies with sample size $s \geq n$ and the caterpillar topology with $n \geq 2$ leaves satisfies  
    \begin{equation}
        G_n(z)=\frac{z^n}{(1-z)^n (1+z)}.
        \label{eq:caterpillar_gf}
    \end{equation}
\end{proposition}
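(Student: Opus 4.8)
The plan is to establish eq.~\eqref{eq:caterpillar_gf} by induction on $n$, translating the recurrence $g_{s,n}=\sum_{i=n-1}^{s-1} g_{i,n-1}$ from Proposition~\ref{prop:cat rec} into an operation on generating functions. First I would handle the base case $n=2$: by part (i) of Proposition~\ref{prop:cat rec}, $g_{s,2}=\floor{s/2}$, so $G_2(z)=\sum_{s\geq 2}\floor{s/2}z^s = z^2+z^3+2z^4+2z^5+\cdots$, which I would sum directly (e.g.\ by pairing consecutive terms, $G_2(z)=\frac{z^2}{1-z}\cdot\frac{1}{1-z^2}$ up to checking the constant term) to get $\frac{z^2}{(1-z)^2(1+z)}$, matching the formula. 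Alternatively, one can note $G_2(z)=\frac12 B_1(z)^2+\frac12 B_1(z^2)$ from the $n=2$ case of Proposition~\ref{lemma: bn gf} applied to caterpillars (the only $2$-leaf shape), which the excerpt already evaluated in eq.~\eqref{eq:s2 proof}.

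For the inductive step, assume $G_{n-1}(z)=\frac{z^{n-1}}{(1-z)^{n-1}(1+z)}$. The key observation is that a partial-sum recurrence $g_{s,n}=\sum_{i=n-1}^{s-1} g_{i,n-1}$ corresponds to multiplication of the generating function by $\frac{z}{1-z}$: indeed $\sum_{s\geq n}\big(\sum_{i=n-1}^{s-1} g_{i,n-1}\big)z^s = \big(\sum_{i\geq n-1} g_{i,n-1}z^i\big)\big(\sum_{k\geq 1} z^k\big) = G_{n-1}(z)\cdot\frac{z}{1-z}$, where the index shift $s=i+k$ with $k\geq 1$ accounts for the single extra leaf of multiplicity $s-i\geq 1$. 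Since $b_{s-i,1}=1$ this is exactly eq.~\eqref{eq:gsn}. Then $G_n(z)=\frac{z}{1-z}\cdot\frac{z^{n-1}}{(1-z)^{n-1}(1+z)}=\frac{z^n}{(1-z)^n(1+z)}$, closing the induction.

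I would also double-check that the lower summation limit and the support conditions are consistent: $g_{i,n-1}=0$ for $i<n-1$, so extending the inner sum down to $i=0$ is harmless, which is what makes the Cauchy-product factorization clean; and $G_{n-1}(z)$ as defined starts at $z^{n-1}$, so no spurious low-order terms appear. The main (and really only) obstacle is getting the base-case sum $\sum_{s\geq 2}\floor{s/2}z^s$ right and confirming it equals $\frac{z^2}{(1-z)^2(1+z)}$; everything after that is a one-line consequence of the "summation $\leftrightarrow$ multiply by $z/(1-z)$" principle. A clean way to dispatch the base case without manipulating floors is to cite eq.~\eqref{eq:s2 proof}, since the $2$-leaf caterpillar is the unique $2$-leaf tree shape and hence $G_2(z)=B_2(z)$.
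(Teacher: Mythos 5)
Your proposal is correct and matches the paper's proof essentially step for step: the base case is dispatched by citing eq.~\eqref{eq:s2 proof} (the $2$-leaf caterpillar being the unique $2$-leaf shape), and the inductive step recognizes eq.~\eqref{eq:gsn} as a Cauchy product with $B_1(z)=z/(1-z)$ after padding the sum with zero terms. No gaps; nothing further to add.
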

\begin{proof}
We proceed by induction. We obtained the result for $n=2$ in eq.~\eqref{eq:s2 proof}, as the caterpillar is the only shape with 2 leaves. Suppose the generating function for the number of rooted binary perfect phylogenies with sample size $s$ and the $n$-leaf caterpillar follows eq.~\eqref{eq:caterpillar_gf}. We apply eq.~\eqref{eq:gsn} to obtain the generating function associated with the caterpillar with $n+1$ leaves:
\begin{align*}
G_{n+1}(z)&=\sum_{s=1}^\infty g_{s,n+1} z^s\\
&=\sum_{s=1}^\infty \bigg(\sum_{i=n}^{s-1} g_{i,n} b_{s-i,1}\bigg) z^s \\
&=\sum_{s=1}^\infty \sum_{i=n}^{s-1} (g_{i,n} z^i)(b_{s-i,1} z^{s-i}).
\end{align*}

\begin{table}[tb]
\centering     
\begin{tabular}{cccccccccccccc}
\toprule
&\multicolumn{13}{c}{Number of leaves ($n$)} \\        
 \cmidrule(r){2-13}
Sample size ($s$) & 13 & 12 & 11 & 10 & 9 & 8 & 7 & 6 & 5 & 4 & 3 & 2 & Total \\
\midrule
 2  &   &    &     &     &     &     &     &     &     &    &    &   1 &    1 \\
 3  &   &    &     &     &     &     &     &     &     &    &  1 &   1 &    2 \\
 4  &   &    &     &     &     &     &     &     &     &  1 &  2 &   2 &    5 \\
 5  &   &    &     &     &     &     &     &     &   1 &  3 &  4 &   2 &   10 \\
 6  &   &    &     &     &     &     &     &   1 &   4 &  7 &  6 &   3 &   21 \\
 7  &   &    &     &     &     &     &   1 &   5 &  11 & 13 &  9 &   3 &   42 \\
 8  &   &    &     &     &     &   1 &   6 &  16 &  24 & 22 & 12 &   4 &   85 \\
 9  &   &    &     &     &   1 &   7 &  22 &  40 &  46 & 34 & 16 &   4 &  170 \\
 10 &   &    &     &   1 &   8 &  29 &  62 &  86 &  80 & 50 & 20 &   5 &  341 \\
 11 &   &    &   1 &   9 &  37 &  91 & 148 & 166 & 130 & 70 & 25 &   5 &  682 \\
 12 &   &  1 &  10 &  46 & 128 & 239 & 314 & 296 & 200 & 95 & 30 &   6 & 1365 \\
 13 & 1 & 11 &  56 & 174 & 367 & 553 & 610 & 496 & 295 & 125 & 36 &  6 & 2730 \\
 \bottomrule
\end{tabular}
\caption{\label{table:caterpillars}
The number $g_{s,n}$ of rooted binary perfect phylogenies with sample size $s$ and a caterpillar shape with $n$ leaves. Entries are obtained using Proposition~\ref{prop:cat rec}; the ``total'' is $g_s = \sum_{n=2}^s g_{s,n}$. The total follows A000975 (with the index shifted so that term $s$ in A000975 is $g_{s+1}$).}
 \end{table}

Because $g_{s,n}=0$ for $1 \leq s \leq n-1$, we add additional zeroes and simplify a convolution:
\begin{align*}
G_{n+1}(z) &=\sum_{s=1}^\infty \sum_{i=1}^{s-1} (g_{i,n} z^i)(b_{s-i,1} z^{s-i}) \\
&= G_{n}(z) \, B_1(z)\\
&=  \frac{z^n}{(1-z)^n(1+z)} \frac{z}{1-z}.
\end{align*}
The induction is complete.
\end{proof}

Note that $g_s$, the total number of caterpillar perfect phylogenies with sample size $s$, allowing all possible values of $n$, $2 \leq n \leq s$, follows the Lichtenberg sequence, OEIS sequence A000975 (the index is shifted, so that if A000975 is denoted $\{a_s\}$, then $a_s=g_{s+1}$). We verify this equivalence by showing an identity of generating functions. Denote the generating function for the number of caterpillars with sample size $s$, considering all possible numbers of leaves, by $G(z) = \sum_{s=2}^\infty g_s z^s$. 

We have that 
\begin{equation}
    \label{eq:gs}
g_s = \sum_{n=2}^s g_{s,n}
\end{equation}
and $G_n(z)=\sum_{s=n}^\infty g_{s,n} z^s$, and we add zeroes to obtain 
\begin{equation*} 
 G(z)= \sum_{s=2}^\infty g_s z^s = \sum_{s=2}^\infty \bigg(\sum_{n=2}^\infty g_{s,n}\bigg) z^s =  \sum_{n=2}^\infty \bigg(\sum_{s=2}^\infty g_{s,n} z^s \bigg)
= \sum_{n=2}^\infty G_n(z).
\end{equation*}
By Proposition~\ref{prop:cat_gf}, we then have
\begin{align} 
G(z) &=\sum_{n=2}^\infty \frac{z^n}{(1-z)^n (1+z)}\nonumber\\
&=\frac{1}{1+z} \bigg(\frac{1}{1-\frac{z}{1-z}}-\frac{z}{1-z}-1\bigg) \nonumber \\
&=\frac{z^2}{(1+z)(1-z)(1-2z)}, 
\label{eq: cat_total_gf}
 \end{align}
where the summation requires $|z| < \frac{1}{2}$. The Lichtenberg sequence has generating function $z/[(1+z)(1-z)(1-2z)]$, differing only in missing a factor of $z$, so that term $a_s$ accords with our $g_{s+1}$.

Note that if we were to consider the 1-leaf perfect phylogeny a caterpillar and to allow a trivial perfect phylogeny with $s=0$, then we would obtain a sequence $\{g'_s\}$ for the total number of perfect phylogenies with sample size $s$ and $n\geq 1$ leaves; for all $s \geq 0$, $g'_s=g_s+1$. This sequence, with generating function $G(z) + \frac{1}{1-z}$ to account for the extra perfect phylogeny with 1 leaf (for all $s \geq 1$) and the trivial perfect phylogeny ($s=0$), accords with A005578, which has generating function $(1-z-z^2)/[(1+z)(1-z)(1-2z)] = G(z)+\frac{1}{1-z}$.

\subsection{General unlabeled tree shape}\label{sec:gen_tree}

We can generalize the argument we have used for recursively counting perfect phylogenies with a caterpillar tree shape to an arbitrary tree shape with $n$ leaves. Let $T$ be an unlabeled tree shape with $|T|$ leaves. Tree $T$ has left and right subtrees, $T_\ell$ and $T_r$, with $|T_\ell|$ and $|T_r|$ leaves. In sequentially decomposing a tree into its left and right subtrees, eventually a single node or a cherry is reached, and the base case can be applied.

Denote by $N_{s,T}$ the number of rooted binary perfect phylogenies with unlabeled tree shape $T$, where $N_{s,T}=0$ if $s < |T|$ or $s$ is not an integer. 
\begin{proposition} 
\label{prop: unlabeled tree}
The number $N_{s,T}$ of rooted binary perfect phylogenies with unlabeled tree shape $T$ and sample size $s \geq |T|$ satisfies \\
(i) $N_{s,T}=1$ if $T$ has a single leaf and $s \geq 1$. \\
(ii) For $(s,T)$ with $s \geq |T| \geq 2$,
\begin{equation}
N_{s,T}=
\begin{cases}
\sum\limits_{i=|T_{\ell}|}^{s-|T_r|} N_{i,T_{\ell}}  N_{s-i,T_r},  & T_{\ell}\neq T_r,\\
\bigg(\sum\limits_{i=|T_{\ell}|}^{s-|T_r|} \frac{1}{2}N_{i,T_{\ell}}  N_{s-i,T_r} 
\bigg) + \frac{1}{2} N_{s/2, T_{\ell}}, & T_{\ell}=T_r.
\end{cases}
\label{eq:general tree recursion}
\end{equation}
\end{proposition}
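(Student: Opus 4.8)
The plan is to argue exactly as in the proofs of Proposition~\ref{prop:bs recursion} and Proposition~\ref{prop:b_s,n recursion}, decomposing a perfect phylogeny of shape $T$ at its root and tracking the one new ingredient: whether the two subtree shapes $T_\ell$ and $T_r$ coincide. Part~(i) is immediate --- if $T$ is the single-leaf shape, the only rooted binary perfect phylogeny with that shape and sample size $s$ is the trivial one consisting of a lone leaf of multiplicity $s$, which exists for every $s \geq 1$, so $N_{s,T}=1$. For part~(ii), fix $T$ with $|T| = |T_\ell| + |T_r| \geq 2$. Any perfect phylogeny $P$ of shape $T$ is obtained by joining a perfect phylogeny $P_\ell$ of shape $T_\ell$ and a perfect phylogeny $P_r$ of shape $T_r$ at a common root, their sample sizes summing to $s$. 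If the left subtree of $P$ carries sample size $i$, then $i \geq |T_\ell|$ since each of its $|T_\ell|$ leaves has multiplicity at least $1$, and likewise $s - i \geq |T_r|$, which gives the stated index range $|T_\ell| \le i \le s - |T_r|$. This is the skeleton of the recursion; what remains is to settle the overcounting caused by $P$ being a non-plane tree, so that the pair of subtrees is unordered.

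When $T_\ell \neq T_r$ as unlabeled shapes, the multiset $\{T_\ell, T_r\}$ of subtree shapes of $T$ has two distinct elements, so in any $P$ of shape $T$ the subtree of shape $T_\ell$ is unambiguously identified; hence $P \mapsto (P_\ell, P_r)$ is a bijection onto pairs with $P_\ell$ of shape $T_\ell$ and some sample size $i$ and $P_r$ of shape $T_r$ and sample size $s-i$, summed over admissible $i$, and no factor of $\tfrac12$ is needed. When $T_\ell = T_r$, a perfect phylogeny of shape $T$ is instead an \emph{unordered} pair of perfect phylogenies, each of shape $T_\ell$, with sample sizes summing to $s$. For a split $\{i, s-i\}$ with $i \neq s-i$, the $N_{i,T_\ell} N_{s-i,T_\ell}$ ordered pairs count each unordered pair twice once $i$ ranges over both $i$ and $s-i$, which is corrected by the $\tfrac12$. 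For $i = s-i = s/2$ (possible only when $s$ is even), the unordered pairs of shape-$T_\ell$ perfect phylogenies of sample size $s/2$ number $\binom{N_{s/2,T_\ell}}{2} + N_{s/2,T_\ell} = \tfrac12\big(N_{s/2,T_\ell}^2 + N_{s/2,T_\ell}\big)$, so relative to the $\tfrac12 N_{s/2,T_\ell}^2$ already supplied by the halved sum one must add the extra $\tfrac12 N_{s/2,T_\ell}$; the convention $N_{s/2,T}=0$ for $s$ odd makes this correction vanish automatically in the odd case, matching eq.~\eqref{eq:general tree recursion}.

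This is a direct combinatorial decomposition rather than an induction: since $|T_\ell|, |T_r| < |T|$, the recursion expresses $N_{s,T}$ through $N$-values of strictly smaller shapes, and repeated decomposition terminates at single leaves (part~(i)) or cherries (where Corollary~\ref{cor:s2} applies). The only substantive point --- and the expected obstacle --- is the $T_\ell = T_r$ case, where one must carefully distinguish equality of tree \emph{shapes} from equality of the two \emph{perfect phylogenies} sitting on them; the transposition symmetry there is handled precisely as in eq.~\eqref{eq:wedderburn_rec} and eq.~\eqref{eq:enumeration_alln}, with the $\binom{\cdot}{2} + (\cdot)$ bookkeeping at $i = s/2$.
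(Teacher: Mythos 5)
Your proposal is correct and follows essentially the same route as the paper: decompose at the root, derive the index range $|T_\ell|\leq i\leq s-|T_r|$ from the minimal sample sizes of the subtrees, and handle the $T_\ell=T_r$ case with the factor of $\tfrac12$ plus the $\binom{N_{s/2,T_\ell}}{2}+N_{s/2,T_\ell}$ bookkeeping at $i=s/2$. Your explicit remark that the convention $N_{s/2,T_\ell}=0$ for odd $s$ makes the correction term vanish is a small clarification the paper leaves implicit, but the argument is the same.
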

\begin{proof}
(i) We have discussed the base case of the single-leaf tree in Proposition~\ref{prop:b_s,n recursion}i. (ii) For the general case, a perfect phylogeny with unlabeled shape $T$ is constructed from perfect phylogenies with unlabeled shapes $T_\ell$ and $T_r$. The minimal sample size assigned to $T_\ell$ is $|T_\ell|$, and the minimal sample size assigned to $T_r$ is $|T_r|$, so that the maximal sample size for $T_\ell$ is $s-|T_r|$. 

If $T_\ell \neq T_r$, then we sum the product of the number of perfect phylogenies for $T_\ell$ and the number of perfect phylogenies for $T_r$ over all possible values $i$ of the sample size assigned to $T_\ell$. Because $i \geq |T_\ell|$ and $s-i \geq |T_r|$, we have $i \leq s-|T_r|$.

If $T_\ell = T_r$ and sample size $i \neq \frac{s}{2}$ is assigned to the left subtree, then a factor of $\frac{1}{2}$ accounts for the fact that each perfect phylogeny traversed is also obtained for sample size $s-i$ assigned to the left subtree. If $T_\ell = T_r$ and $i=\frac{s}{2}$ (for even $s$), then we count the $\binom{N_{s/2, T_{\ell}}}{2}$ trees with distinct subtrees and the $N_{s/2,T_{\ell}}$  trees with identical subtrees. 
\end{proof}

For example, suppose $T$ is the 4-leaf symmetric unlabeled tree shape, with $T_\ell$ and $T_r$ both corresponding to the 2-leaf caterpillar. Proposition~\ref{prop: unlabeled tree} yields, for $s \geq 4$,
\begin{align}
N_{s,T} &=\bigg( \sum\limits_{i=2}^{s-2} \frac{1}{2}g_{i,2}g_{s-i,2} \bigg) +\frac{1}{2}g_{s/2,2} \nonumber \\ 
&=\bigg( \sum\limits_{i=2}^{s-2} \frac{1}{2}\bigfloor{ \frac{i}{2}} \bigfloor{ \frac{s-i}{2}} \bigg) + \frac{1}{2}\bigfloor{ \frac{s}{4} } \mathbbm{1}_{\{s \text{ is even}\}} \nonumber \\
&=\begin{cases}
\frac{(s + 1)(s - 1)(s - 3)}{48}, & \textnormal{odd } s \geq 5, \\
\frac{s(s - 1)(s -2)}{48} + \frac{1}{2}\floor{ \frac{s}{4}}, & \textnormal{even } s \geq 4.
\label{eq: noncat4}
\end{cases}
\end{align}
Note that the derivation follows the proof of Corollary~\ref{cor: s4}, eqs.~\eqref{eq:odd2} and \eqref{eq:even2}.

Recall that there are only two 4-leaf unlabeled topologies, the symmetric shape and the caterpillar. Adding eq.~\eqref{eq: noncat4}, counting perfect phylogenies for the symmetric shape, and eq.~\eqref{eq: C4}, counting those for the caterpillar, we obtain eq.~\eqref{eq:s4}, counting all perfect phylogenies with $n=4$ leaves. In particular, for odd $s$, using Lemma~\ref{lemma:evenodd} and Proposition~\ref{prop:gsn} to remove the floor function,
\begin{align}
\frac{(s+1)(s-1)(s-3)}{48}+\bigfloor{ \frac{s(s-2)(2s-5)}{24}} =&\frac{(s+1)(s-1)(s-3)}{48}+\frac{s(s-2)(2s-5)}{24}-\frac{1}{8} \nonumber \\
=& \frac{(s-1)(s-3)(5s-1)}{48}. \nonumber 
\end{align}
For even $s$, by Proposition~\ref{prop:gsn},
\begin{align}
\frac{s(s-1)(s-2)}{48}+\frac{1}{2}\bigfloor{ \frac{s}{4}}+ \bigfloor{ \frac{s(s-2)(2s-5)}{24}} 
 = &\frac{s(s-1)(s-2)}{48}+\frac{1}{2}\bigfloor{ \frac{s}{4}}+\frac{s(s-2)(2s-5)}{24} \nonumber \\
=& \frac{s(s-2)(5s-11)}{48}+\frac{1}{2}\bigfloor{ \frac{s}{4}}. \nonumber
\end{align}

\section{Discussion}

We have studied the enumerative combinatorics of rooted binary perfect phylogenies. We have provided a recursive formula to enumerate the rooted binary perfect phylogenies with a given sample size $s$ via eq.~\eqref{eq:enumeration_alln}, and we have provided an asymptotic approximation in eq.~\eqref{eq:approx}. We have also refined the enumeration, counting rooted binary perfect phylogenies for a given sample size $s$ separately for each possible value of the number of leaves $n$ via eq.~\eqref{eq: binary_rec}. We have counted rooted binary perfect phylogenies associated with specific shapes (eq.~\eqref{eq:general tree recursion}), notably a caterpillar shape (eq.~\eqref{eq:caterpillarcounts}). A summary of results appears in Table~\ref{table:results}.

The enumerations build on the efforts of Palacios et al.~\cite{palacios2022} to enumerate the labeled and unlabeled topologies and labeled and unlabeled histories that can be associated with a rooted perfect phylogeny (binary or multifurcating). For rooted binary perfect phylogenies, we provide enumerations that can be employed as a starting point for the enumerations of labeled and unlabeled topologies and labeled and unlabeled histories by Palacios et al.~\cite{palacios2022}.

The recurrence for $b_s$, the number of rooted binary perfect phylogenies with sample size $s$ (eq.~\eqref{eq:enumeration_alln}), is similar to the recurrence for the number of rooted binary unlabeled trees (eq.~\eqref{eq:wedderburn_rec})---except that it requires the addition of a 1 for the single-leaf perfect phylogeny whereas the recurrence for the rooted binary unlabeled trees does not include a corresponding possibility. This small difference leads to a large difference in asymptotic growth. Whereas the asymptotic growth of the rooted binary unlabeled trees---the perfect phylogenies with sample size $s$ and $s$ leaves---is approximately $0.3188(2.4833)^s s^{-3/2}$, the growth of the rooted binary perfect phylogenies with sample size $s$ is substantially larger, approximately $0.3519(3.2599)^s s^{-3/2}$ (eq.~\eqref{eq:approx}). 

Some of our results produce known integer sequences. The sequences for $b_s$ (OEIS A113822) and $b_{s+1,s}$ (OEIS A085748) have been reported but little studied; $b_{s+1,s}$ counts rooted binary labeled trees with $s$ leaves in which all leaves except one are labeled ``1'' and the last leaf is labeled ``2''; equivalently, it is the number of rooted binary trees that are unlabeled except that one leaf is given a label. Sequence $b_{s,3}$ follows OEIS A002620 (Corollary~\ref{cor:s3}), the ``quarter-squares.'' The number of ways to place a given sample size across \emph{some} caterpillar shape(Table~\ref{table:caterpillars}) follows OEIS A000975, a sequence well studied in other contexts.

Perfect phylogenies have applications in multiple biological settings. Palacios et al.~\cite{palacios2022} motivated their study from an interest in inference of evolutionary trees. Perfect phylogenies appear in problems concerning DNA sequences descended in a population from an ancestral sequence by a process with little or no genetic recombination~\cite[pp.~460-462]{Bafna2004, Clark06, Gusfield14}. Recently, they have been considered in problems with cell lineages and tumors~\cite{ElKebir2016, Wu2020}. Our enumerative results assist in characterizing the sizes of sets of perfect phylogenies relevant to the various biological applications. 

\begin{table}[tb]
    \centering
    \begin{tabular}{l|lll}
    \hline
Tree shapes & Recursion & Generating function & Asymptotics\\
\hline
All shapes           & $b_s$,     Proposition~\ref{prop:bs recursion}    & $B(z)$,   Proposition~\ref{prop:Bs gen func} & Theorem~\ref{thm: binary asymptotics} \\
All $n$-leaf shapes  & $b_{s,n}$, Proposition~\ref{prop:b_s,n recursion} & $B_n(z)$, Proposition~\ref{lemma: bn gf}     & - \\
$n$-leaf caterpillar & $g_{s,n}$, Proposition~\ref{prop:cat rec}         & $G_n(z)$, Proposition~\ref{prop:cat_gf}      & - \\
All caterpillars     & $g_s$,     Equation~\ref{eq:gs}                   & $G(z)$,   Equation~\ref{eq: cat_total_gf}    & - \\
Arbitrary shape      & $N_{s,T}$, Proposition~\ref{prop: unlabeled tree} & -                                            & - \\
\hline
 \end{tabular}
    \caption{The main results of the paper. We have variously obtained recursions, generating functions, and asymptotics for the number of rooted binary perfect phylogenies with sample size $s$: considering all tree shapes, all tree shapes with $n$ leaves, the $n$-leaf caterpillar shape, all caterpillar shapes, and a single shape that is specified, but that is arbitrary.}
    \label{table:results}
\end{table}

From the perspective of the lattice formulation for perfect phylogenies (Figure \ref{fig:2lattice5}), we have evaluated the number of non-empty elements of the lattice, $b_s$, and $b_{s,n}$, the number of elements that lie $s-n+1$ ``steps'' from the minimal element $\phi$ to the maximal element, a single leaf. However, in describing lattices of binary perfect phylogenies, we have left a number of questions unanswered. In how many ways can the lattice be traversed---via the order relation---between the minimal and maximal perfect phylogenies? How many perfect phylogenies exist with specified features, perhaps concerning numbers of nodes with different numbers of descendants or leaves with specified multiplicities? Applications of the lattice formulation may provide further insights on these questions.

\section*{Acknowledgements}
We thank Michael Fuchs, Bernhard Gittenberger, and Julia Palacios for discussions. Support was provided by a National Science Foundation Graduate Research Fellowship and by National Institutes of Health grant R01 HG005855.

\bibliography{references}

\begin{thebibliography}{10}

\bibitem{Bafna2004}
V.~Bafna, D.~Gusfield, S.~Hannenhalli, and S.~Yooseph.
\newblock A note on efficient computation of haplotypes via perfect phylogeny.
\newblock {\em Journal of Computational Biology}, 11:858–866, 2004.
\newblock \href {https://doi.org/10.1089/cmb.2004.11.858} {\path{doi:10.1089/cmb.2004.11.858}}.

\bibitem{Clark06}
T.~G. Clark, M.~De~Iorio, and R.~C. Griffiths.
\newblock {Bayesian logistic regression using a perfect phylogeny}.
\newblock {\em Biostatistics}, 8:32--52, 2006.
\newblock \href {https://doi.org/10.1093/biostatistics/kxj030} {\path{doi:10.1093/biostatistics/kxj030}}.

\bibitem{Comtet74}
L.~Comtet.
\newblock {\em Advanced Combinatorics}.
\newblock Reidel, Boston, 1974.
\newblock \href {https://doi.org/10.1007/978-94-010-2196-8} {\path{doi:10.1007/978-94-010-2196-8}}.

\bibitem{drmota}
M.~Drmota.
\newblock {\em Random Trees}.
\newblock Springer, Vienna, 2009.
\newblock \href {https://doi.org/10.1007/978-3-211-75357-6} {\path{doi:10.1007/978-3-211-75357-6}}.

\bibitem{ElKebir2016}
M.~El-Kebir, G.~Salas, L.~Oesper, and B.~J. Raphael.
\newblock Inferring the mutational history of a tumor using multi-state perfect phylogeny mixtures.
\newblock {\em Cell Systems}, 3:43--53, 2020.
\newblock \href {https://doi.org/10.1016/j.cels.2016.07.004} {\path{doi:10.1016/j.cels.2016.07.004}}.

\bibitem{Felsenstein05}
J.~Felsenstein.
\newblock {\em Inferring Phylogenies}.
\newblock Sinauer, Sunderland, MA, 2003.

\bibitem{flajolet_sedgewick_2009}
P.~Flajolet and R.~Sedgewick.
\newblock {\em Analytic Combinatorics}.
\newblock Cambridge University Press, Cambridge, 2009.
\newblock \href {https://doi.org/10.1017/cbo9780511801655} {\path{doi:10.1017/cbo9780511801655}}.

\bibitem{Gusfield91}
D.~Gusfield.
\newblock Efficient algorithms for inferring evolutionary trees.
\newblock {\em Networks}, 21:19--28, 1991.
\newblock \href {https://doi.org/10.1002/net.3230210104} {\path{doi:10.1002/net.3230210104}}.

\bibitem{Gusfield14}
D.~Gusfield.
\newblock {\em ReCombinatorics}.
\newblock MIT Press, Cambridge, 2014.

\bibitem{Harding71}
E.~F. Harding.
\newblock The probabilities of rooted tree-shapes generated by random bifurcation.
\newblock {\em Advances in Applied Probability}, 3:44--77, 1971.
\newblock \href {https://doi.org/10.2307/1426329} {\path{doi:10.2307/1426329}}.

\bibitem{Landau77}
B.~V. Landau.
\newblock An asymptotic expansion for the {W}edderburn–{E}therington sequence.
\newblock {\em Mathematika}, 24:262–265, 1977.
\newblock \href {https://doi.org/10.1112/s0025579300009177} {\path{doi:10.1112/s0025579300009177}}.

\bibitem{palacios2022}
J.~A. Palacios, A.~Bhaskar, F.~Disanto, and N.~A. Rosenberg.
\newblock Enumeration of binary trees compatible with a perfect phylogeny.
\newblock {\em Journal of Mathematical Biology}, 84:54, 2022.
\newblock \href {https://doi.org/10.1007/s00285-022-01748-w} {\path{doi:10.1007/s00285-022-01748-w}}.

\bibitem{Steel16}
M.~Steel.
\newblock {\em Phylogeny}.
\newblock Society for Industrial and Applied Mathematics, Philadelphia, 2016.
\newblock \href {https://doi.org/10.1137/1.9781611974485} {\path{doi:10.1137/1.9781611974485}}.

\bibitem{Wu2020}
Y.~Wu.
\newblock Accurate and efficient cell lineage tree inference from noisy single cell data: the maximum likelihood perfect phylogeny approach.
\newblock {\em Bioinformatics}, 36:742--750, 2020.
\newblock \href {https://doi.org/10.1093/bioinformatics/btz676} {\path{doi:10.1093/bioinformatics/btz676}}.

\end{thebibliography}


\end{document}